\newcolumntype{V}{!{\vrule width 1pt}}
\newtheorem{theorem}{Theorem}
\newtheorem{lemma}[theorem]{Lemma}
\newtheorem{corollary}[theorem]{Corollary}
\newtheorem{proposition}[theorem]{Proposition}
\newtheorem{definition}[theorem]{Definition}
\newtheorem{example}[theorem]{Example}
\newtheorem{remark}[theorem]{Remark}
\newtheorem{problem}[theorem]{Problem}
\def\lcm{{\rm lcm}}
\def\gcd{{\rm gcd}}
\def\diag{{\rm diag}}
\def\rank{{\rm rank}}
\begin{document}

\begin{frontmatter}

\title{On the equivalence problem of Smith forms for multivariate polynomial matrices}

\author[sju]{Dong Lu}
\ead{donglu@swjtu.edu.cn}

\author[klmm,ucas]{Dingkang Wang}
\ead{dwang@mmrc.iss.ac.cn}

\author[hunu]{Fanghui Xiao}
\ead{xiaofanghui@hunnu.edu.cn}

\author[SHANT]{Xiaopeng Zheng\corref{cor}}
\ead{zhengxiaopeng@amss.ac.cn}

\cortext[cor]{Corresponding author}

\address[sju]{School of Mathematics, Southwest Jiaotong University, Chengdu 610031, China}

\address[klmm]{KLMM, Academy of Mathematics and Systems Science, CAS, Beijing 100190, China}

\address[ucas]{School of Mathematical Sciences, University of Chinese Academy of Sciences, Beijing 100049, China}

\address[hunu]{MOE-LCSM, School of Mathematics and Statistics, Hunan Normal University, Changsha 410081, China}

\address[SHANT]{School of Mathematics and Computer Sciences, Shantou University, Shantou 515821, China}

\begin{abstract}
 This paper delves into the equivalence problem of Smith forms for multivariate polynomial matrices. Generally speaking, multivariate ($n \geq 2$) polynomial matrices and their Smith forms may not be equivalent. However, under certain specific condition, we derive the necessary and sufficient condition for their equivalence. Let $F\in K[x_1,\ldots,x_n]^{l\times m}$ be of rank $r$, $d_r(F)\in K[x_1]$ be the greatest common divisor of all the $r\times r$ minors of $F$, where $K$ is a field, $x_1,\ldots,x_n$ are variables and $1 \leq r \leq \min\{l,m\}$. Our key findings reveal the result: $F$ is equivalent to its Smith form if and only if all the $i\times i$ reduced minors of $F$ generate $K[x_1,\ldots,x_n]$ for $i=1,\ldots,r$.
\end{abstract}

\begin{keyword}
 Smith forms, Multivariate polynomial matrices, Matrix equivalence, Reduced minors, Localization
\end{keyword}
\end{frontmatter}

\section{Introduction}

 An important issue in polynomial matrix theory is the equivalence problem which consists in testing whether two polynomial matrices are equivalent.  The importance of the polynomial matrix equivalence problem not only lies in the demands of mathematical theoretical research, but also in its extensive applications in various engineering fields such as multidimensional system theory and signal processing (see \citep{Bose1982,Bose2003} and the references therein). Due to the simplicity of the Smith form of a polynomial matrix and the fact that it contains all the information of the original matrix, mathematicians and engineers have been paying attention to the equivalence problem between polynomial matrices and their Smith forms for decades.

 When studying eigenvalues and eigenvectors of matrices in linear algebra, a well-known conclusion is that any $\lambda$-matrix (all the entries are univariate polynomials in $K[\lambda]$) is equivalent to its Smith form \citep{Kailath1980}. This equivalence is rooted in the fact that $K[\lambda]$ is a principal ideal domain, which boasts efficient computational properties, including Euclidean division, facilitating the transformation of $\lambda$-matrices into their Smith forms. However, this  equivalence may not necessarily hold in the multivariate case (where the number of variables is greater than or equal to $2$). Although some progress has been made in recent years \citep{Boudellioua2013Further,
 Boudellioua2010Serre,Cluzeau2015A,LiD2022The,Li2017On,Lin2006On,Liu2024,Lu2023New}, determining the equivalence between multivariate polynomial matrices and their corresponding Smith forms remains an unresolved challenge.

 \cite{Frost1978} conducted the first study on the equivalence relation between any bivariate polynomial matrix $F\in K[x_1,x_2]^{l\times l}$ and its Smith form, asserting that the necessary and sufficient condition for their equivalence is when all the $i\times i$ reduced minors of $F$ generate $K[x_1,x_2]$, where $K$ is a field, $x_1,x_2$ are variables and $i=1,\ldots,l$. However, this assertion is not valid, as they have provided a counter-example in \citep{Frost1981}. In fact, under general circumstances, the above condition is merely a necessary one. Therefore, people have begun to consider under what conditions the matrix must satisfy for the assertion to hold true.

 \cite{Lee1983} transformed the above equivalence problem into solving a system of equations. However, although this method is theoretically feasible, it is not suitable from a computational perspective. \cite{LiD2019} focused on the equivalence between a certain type of bivariate polynomial matrices and their Smith forms. Let $F\in K[x_1,x_2]^{l\times l}$ be of full rank, and $\det(F)=p^t$ with $p\in K[x_1]$ being an irreducible polynomial, where $t$ is a positive integer. They proved that $F$ is equivalent to the diagonal matrix $\diag\{1,\ldots,1,p^t\}$ if and only if all the $(l-1)\times (l-1)$ minors of $F$ generate $K[x_1,x_2]$. Clearly, $\diag\{1,\ldots,1,p^t\}$ is the most special form among all possible Smith forms of $F$. There is no doubt that the general form of the Smith form of $F$ is $\diag\{p^{r_1},p^{r_2},\ldots,p^{r_l}\}$, where $r_i\mid r_{i+1}$ for $i=1,\ldots,l-1$, and $\sum_{j=1}^{l}r_j = t$. \cite{Zheng2023New} proved that Frost and Storey's assertion is correct in this case. Subsequently, \cite{Guan2024,Guan2024Further} used the Suslin's stability theorem and a generalized theorem of Vaserstein to extend the main result in \citep{Zheng2023New} to the multivariate case, i.e., if $F\in K[x_1,\ldots,x_n]^{l\times l}$ and $\det(F)=p^t$ with $p\in K[x_1]$ being an irreducible polynomial, then $F$ is equivalent to $\diag\{p^{r_1},p^{r_2},\ldots,p^{r_l}\}$ if and only if all the $i\times i$ reduced minors of $F$ generate $K[x_1,\ldots,x_n]$, where $i=1,\ldots,l$.

 The main objective of this paper is to expand the scope of limitation ``$\det(F)=p^t$ with $p\in K[x_1]$ being an irreducible polynomial'' to $\det(F)\in K[x_1]$, and explore the necessary and sufficient condition for $F$ to be equivalent to its Smith form under the new constraint.

 Assuming that $\det(F)=p_1^{r_1}p_2^{r_2}\cdots p_t^{r_t}$ is an irreducible decomposition of $\det(F)$, where $p_1,p_2,$ $\ldots,p_t\in K[x_1]$ are distinct and irreducible polynomials, and $r_1,r_2,\ldots,r_t$ are positive integers. When $n=2$ and $t=2$, \cite{Lu2023} proved that $F$ is equivalent to $\diag\{1,\ldots,1,p_1^{t_1}p_2^{t_2}\}$ if and only if all the $(l-1)\times (l-1)$ minors of $F$ generate $K[x_1,x_2]$. The idea proposed in \citep{Lu2023} is to establish a homomorphism mapping from $K[x_1,x_2]$ to the quotient ring $(K[x_1]/\langle p_1p_2\rangle)[x_2]$, then reduce $F$ to its Smith form by using polynomial division in $(K[x_1]/\langle p_1p_2\rangle)[x_2]$. However, some examples have showed that this method is not applicable when $n \geq 3$ or $t \geq 3$. Consequently, the employment of an additional localization technique becomes imperative to tackle the issue effectively. In this paper, we strive to completely solve the equivalence problem between this class of multivariate polynomial matrices and their Smith forms by utilizing the localization technique, ultimately arriving at the following main result.

 \begin{theorem}[{$=$ Theorem \ref{main-theorem-1} in Section \ref{sec_Biv}}]\label{sec-1-theorem-1}
  Let $F\in K[x_1,\ldots,x_n]^{l\times l}$ and $\det(F)\in K[x_1]$, then $F$ is equivalent to its Smith form if and only if all the $i\times i$ reduced minors of $F$ generate $K[x_1,\ldots,x_n]$, where $i=1,\ldots,l$.
 \end{theorem}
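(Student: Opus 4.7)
Necessity is standard. If $F=USV$ with $U,V$ unimodular and $S=\diag(s_1,\ldots,s_l)$ the Smith form of $F$, then the Cauchy--Binet formula implies that the ideal generated by the $i\times i$ minors of $F$ equals that generated by the $i\times i$ minors of $S$, and in particular $d_i(F)=d_i(S)=s_1\cdots s_i$. Since $s_1\cdots s_i$ is itself one of the $i\times i$ minors of $S$ (the leading principal one), the reduced minor $1=(s_1\cdots s_i)/d_i(S)$ appears among the $i\times i$ reduced minors of $S$, hence of $F$; consequently these reduced minors generate $K[x_1,\ldots,x_n]$.

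For sufficiency I would first observe that, because $\det(F)\in K[x_1]$ and any divisor in $K[x_1,\ldots,x_n]$ of an element of $K[x_1]$ must itself lie in $K[x_1]$ (by a direct degree argument in $x_2,\ldots,x_n$), every determinantal divisor $d_i(F)$ lies in $K[x_1]$. Hence the Smith form $S=\diag(s_1,\ldots,s_l)$ is well defined with $s_i\in K[x_1]$ and $s_i\mid s_{i+1}$. Let $\det(F)=p_1^{r_1}\cdots p_t^{r_t}$ be the irreducible factorization in $K[x_1]$, and for each $k$ put $N_k=\prod_{j\ne k}p_j^{r_j}$ and $R_k:=K[x_1,\ldots,x_n][N_k^{-1}]$. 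Over $R_k$ the determinant of $F$ is, up to a unit, the prime power $p_k^{r_k}$, and the reduced-minor hypothesis is preserved by localization. The single-prime result of \citep{Guan2024,Guan2024Further} then provides unimodular $U_k,V_k\in\mathrm{GL}_l(R_k)$ with $U_kFV_k=S$ over $R_k$.

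The principal obstacle is the local-to-global step: assembling the pairs $(U_k,V_k)$ into a single pair of unimodular matrices over $R:=K[x_1,\ldots,x_n]$. Since $\gcd(N_1,\ldots,N_t)=1$ in $K[x_1]$, a Bezout identity $\sum_k\alpha_k N_k=1$ holds and the distinguished opens $\mathrm{Spec}(R_k)$ cover $\mathrm{Spec}(R)$. On each overlap $R[N_k^{-1},N_j^{-1}]$ the two local trivializations differ by transition matrices $(U_kU_j^{-1},V_j^{-1}V_k)$ that jointly stabilize $S$; after further inverting $\det(F)$, these transitions essentially take values in $\mathrm{SL}_l$ of a polynomial localization, so by the Suslin stability theorem (as already invoked in \citep{Guan2024}) they are realizable as products of elementary matrices. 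A partition-of-unity argument built from the Bezout identity should then split each transition into factors that, when absorbed into the $(U_k,V_k)$, yield local data agreeing on all overlaps and therefore descending to global unimodular $U,V\in\mathrm{GL}_l(R)$ satisfying $UFV=S$. I expect the most delicate part of the argument to be controlling the precise form of these transition matrices and handling the small-rank edge cases $l\le 2$ where Suslin stability is unavailable, which will likely require a separate direct treatment in the spirit of \citep{Lu2023}.
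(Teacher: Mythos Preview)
Your necessity argument is correct. For sufficiency, there are two genuine gaps. First, the single-prime result of \citep{Guan2024,Guan2024Further} is stated and proved over $K[x_1,\ldots,x_n]$ with $K$ a field; you invoke it over $R_k=K[x_1,\ldots,x_n][N_k^{-1}]\cong (K[x_1]_{N_k})[x_2,\ldots,x_n]$, whose coefficient ring $K[x_1]_{N_k}$ is a PID but not a field. Nothing in the cited work licenses this transfer, and re-checking the Suslin-stability and Vaserstein ingredients over such base rings is itself a nontrivial task you have not carried out. Second, and more seriously, your local-to-global step is only a heuristic. The transitions on overlaps are pairs $(A,B)$ with $ASB=S$; this stabilizer is not $\mathrm{SL}_l\times\mathrm{SL}_l$, your move of ``further inverting $\det(F)$'' shifts the problem to a ring where $S$ is invertible but then obliges you to descend back to $K[x_1,\ldots,x_n]$, and the partition-of-unity splitting you allude to has no evident reason to respect the two-sided constraint $ASB=S$. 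Your own closing caveats (``the most delicate part'', ``edge cases $l\le2$'') are an accurate signal that the crux of the argument is missing rather than routine.

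For contrast, the paper never localizes $F$. It works over $K[x_1,\ldots,x_n]$ throughout to peel off one irreducible factor $p_k$ at a time (Lemma~\ref{Lemma_chal_1} together with Lemmas~\ref{Zheng-lemma} and~\ref{zheng-theorem}), and uses Lemma~\ref{lemma-reduced-1} to pass the reduced-minor hypothesis to the cofactor. Induction on the number of primes then reduces everything to the single assertion that
\[
\diag\{f_1,\ldots,f_l\}\cdot U\cdot\diag\{g_1,\ldots,g_l\}\ \sim_{K[x_1,\ldots,x_n]}\ \diag\{f_1g_1,\ldots,f_lg_l\}
\]
whenever $U$ is unimodular and $\gcd(f_l,g_l)=1$ in $K[x_1]$. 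Only this highly structured matrix is treated by localization (Lemma~\ref{biv-main-lemma-1-multi}): over $(K[x_1]_{S_i})[x_2,\ldots,x_n]$ with $S_1=\{f_l^k\}$ and $S_2=\{g_l^k\}$ one diagonal factor becomes unimodular, so equivalence with the specialization $A(x_1,0,\ldots,0)$ is immediate on each side; a Quillen-type patching lemma for two comaximal multiplicative sets in $K[x_1]$ (Lemma~\ref{main-lem-2-multi}) then yields the global equivalence. The localization is thus applied at a point where both the local computation and the patching are transparent, which is exactly what your scheme lacks.
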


 Subsequently, we generalize the above theorem to the case of non-square and non-full rank matrices with the aid of the Quillen-Suslin theorem, and thereby obtain the following corollary.

 \begin{corollary}[{$=$ Corollary \ref{main-corollary} in Section \ref{sec_Biv}}]\label{sec-1-corollary-1}
  Let $F\in K[x_1,\ldots,x_n]^{l\times m}$ with rank $r$ and $d_r(F)\in K[x_1]$, where $1 \leq r \leq \min\{l,m\}$. Then $F$ is equivalent to its Smith form if and only if all the $i\times i$ reduced minors of $F$ generate $K[x_1,\ldots,x_n]$, where $i=1,\ldots,r$.
 \end{corollary}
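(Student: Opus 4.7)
The plan is to reduce the corollary to Theorem \ref{sec-1-theorem-1} via the Quillen--Suslin theorem. Writing $R = K[x_1,\ldots,x_n]$ and denoting the Smith form of $F$ by $S = \diag(s_1,\ldots,s_r,0,\ldots,0)$ with $s_1 \mid s_2 \mid \cdots \mid s_r$, I would handle the two directions separately.

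Necessity is a short calculation. Equivalence preserves the determinantal ideals, so for each $i \leq r$ the $i \times i$ minors of $F$ generate the same ideal as those of $S$. The divisibility chain among the $s_j$ forces $d_i(F) = s_1 s_2 \cdots s_i$, which is itself one of the $i\times i$ minors of $S$. Hence the set of reduced $i\times i$ minors contains the constant $1$, and therefore generates $R$.

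For sufficiency, the first step is a reduction lemma of the following form, which has become standard in multidimensional systems theory as a consequence of Quillen--Suslin: under the hypothesis that the $r\times r$ reduced minors of $F$ generate $R$, there exist unimodular matrices $U \in R^{l \times l}$ and $V \in R^{m \times m}$ such that $UFV = \diag(G,0)$, where $G \in R^{r \times r}$ has full rank and $\det(G)$ is a unit multiple of $d_r(F)$. The idea behind the lemma is that the reduced-minors hypothesis makes the image of $F$ and the quotient of $R^m$ by its syzygies projective of rank $r$; Quillen--Suslin trivializes these projective modules and the block decomposition is read off from chosen bases. Once the lemma is in hand, I would verify that the $i\times i$ reduced minors of $G$ also generate $R$ for every $i \leq r$, which follows from the identity $d_i(F)=d_i(G)$ together with the observation that the $i\times i$ minors of $\diag(G,0)$ coincide with those of $G$ for $i \leq r$.

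The argument then closes by applying Theorem \ref{sec-1-theorem-1} to $G$: its determinant lies in $K[x_1]$, and all of its reduced minors generate $R$ up to level $r$, so $G$ is equivalent to its Smith form $S_G$; stitching this with the block equivalence produces $F \sim \diag(S_G,0)$, which is precisely the Smith form of $F$. I expect the main obstacle to be the reduction lemma itself: one must not only extract a block decomposition from Quillen--Suslin, but also certify that the invariant $d_r(F)$ is \emph{faithfully} transferred to $\det(G)$ rather than to some proper divisor, so that the hypothesis $\det(G)\in K[x_1]$ of Theorem \ref{sec-1-theorem-1} is genuinely available for the reduced square matrix.
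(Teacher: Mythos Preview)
Your proposal is correct and follows essentially the same route as the paper: the paper's ``reduction lemma'' is obtained by applying the Lin--Bose theorem (Theorem~\ref{Lin-Bose-theorem}) twice---once to peel off a ZLP right factor and once to peel off a zero right prime left factor---each time completing to a unimodular matrix via Quillen--Suslin, which yields $VFU=\begin{pmatrix}G_3&0\\0&0\end{pmatrix}$ with $G_3\in R^{r\times r}$. Your concern about $\det(G)\in K[x_1]$ is a non-issue: equivalence preserves $d_i$ (Proposition~\ref{lemma-reduced-2}), so $\det(G_3)=d_r(G_3)=d_r(F)\in K[x_1]$ and likewise $J_i(G_3)=J_i(F)=R$ for all $i\le r$, after which Theorem~\ref{main-theorem-1} applies directly.
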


  The rest of the paper is organized as follows. In Section \ref{sec_Pre}, we recall some terminologies and some preliminary results needed in this paper. In addition, we present the major problem that shall consider. In Section \ref{sec_Biv}, we first use an example to illustrate the challenges faced by the problem to be solved, then tackle them in turn, and finally provide the proofs of Theorem \ref{sec-1-theorem-1} and Corollary \ref{sec-1-corollary-1}. Some concluding remarks are provided in Section \ref{sec_conclusions}.

\section{Preliminaries}\label{sec_Pre}

 Let $K$ be a field, $K[x_1,\ldots,x_n]$ be the polynomial ring in variables $x_1,\ldots,x_n$ over $K$, and $K[x_1,\ldots,x_n]^{l\times m}$ be the set of $l\times m$ matrices with entries in $K[x_1,\ldots,x_n]$. Without loss of generality, we assume that $l \leq m$ throughout the paper. In addition, we use ``w.r.t.'' to represent ``with respect to''.

 For any given $F\in K[x_1,\ldots,x_n]^{l\times m}$, let $\rank(F)$ be the rank of $F$. We use $d_i(F)$ to denote the greatest common divisor of all the $i\times i$ minors of $F$, where $i=1,\ldots,\rank(F)$. Here, we make the convention that $d_0(F) \equiv 1$ and $d_i(F) \equiv 0$ for all $i > \rank(F)$. For convenience, let $\diag\{f_1,\ldots,f_l\}$ represent the $l\times l$ diagonal matrix whose diagonal elements are $f_1,\ldots,f_l$, where $f_1,\ldots,f_l\in K[x_1,\ldots,x_n]$.

 \subsection{Matrix equivalence}

 As we all know, $U\in K^{l\times l}$ is called invertible in linear algebra if and only if $\det(U) \neq 0$. Similarly, in fields such as commutative algebra and multidimensional systems, there exists a comparable definition, detailed below.

 \begin{definition}
  Let $R$ be a commutative ring and $U\in R^{l\times l}$, then $U$ is said to be {\em unimodular} if $\det(U)$ is a unit in $R$.
 \end{definition}

 According to the above definition, if $R = K[x_1,\ldots,x_n]$, then $U$ is a unimodular matrix if and only if $\det(U) \in K \setminus \{0\}$. The following definition occupies an important position in the main results of the paper.

\begin{definition}[{\cite{Lin1988}}]
 Let $F \in K[x_1,\ldots,x_n]^{l \times m}$ with rank $r$, where $1\leq r \leq l$. For any given integer $i$ with $1 \leq i \leq r$, let $a^{(i)}_{1}, \ldots, a^{(i)}_{\beta_i}$ be all the $i \times i$ minors of $F$, where $\beta_i={l \choose i}{m \choose i}$. Extracting $d_{i}(F)$ from $a^{(i)}_{1}, \ldots, a^{(i)}_{\beta_i}$ yields
 \[ a^{(i)}_{j}=d_{i}(F) \cdot b^{(i)}_{j}, ~ j=1, \ldots, \beta_i.\]
 Then, $b^{(i)}_{1}, \ldots, b^{(i)}_{\beta_i}$ are called all the $i \times i$ {\em reduced minors} of $F$.
\end{definition}

 For convenience, we use $J_i(F)$ to denote the ideal in $K[x_1,\ldots,x_n]$ generated by all the $i\times i$ reduced minors of $F$.

 Now, we recall the concept of matrix equivalence.

 \begin{definition}
  Let $R$ be a commutative ring, and $F,Q\in R^{l\times m}$. Then $F$ is said to be equivalent to $Q$ if there are two unimodular matrices $U\in  R^{l\times l}$ and $V\in R^{m\times m}$ such that $UFV=Q$. For convenience, $F$ being equivalent to $Q$ over $R$ is denoted by $F \sim_R Q$.
 \end{definition}

 For any two equivalent matrices $A$ and $B$, they possess the following desirable property.

 \begin{proposition}[{\cite{Zheng2023New}}]\label{lemma-reduced-2}
  Let $A,B\in K[x_1,\ldots,x_n]^{l\times m}$. If $A \sim_{K[x_1,\ldots,x_n]} B$, then $d_i(A) = d_i(B)$ and $J_i(A) = J_i(B)$, where $i=1,\ldots,l$.
 \end{proposition}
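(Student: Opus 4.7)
The plan is to reduce everything to the Cauchy--Binet formula applied to the equivalence $UAV=B$. Write $a^{(i)}_j$ and $b^{(i)}_j$ for the $i\times i$ minors and reduced minors of $A$, and $c^{(i)}_k$, $e^{(i)}_k$ for those of $B$. Since $U$ and $V$ are unimodular over $K[x_1,\ldots,x_n]$, their adjugates provide inverses $U^{-1}$ and $V^{-1}$ with entries in the same ring, so the relation $A=U^{-1}BV^{-1}$ is available and symmetric in $A$ and $B$.

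First I would establish the equality $d_i(A)=d_i(B)$. By Cauchy--Binet, every $i\times i$ minor of $B=UAV$ is a $K[x_1,\ldots,x_n]$-linear combination of the $i\times i$ minors of $A$; hence the gcd of the minors of $A$ divides every $i\times i$ minor of $B$, giving $d_i(A)\mid d_i(B)$. Applying the same argument to $A=U^{-1}BV^{-1}$ yields $d_i(B)\mid d_i(A)$, so the two are associates in $K[x_1,\ldots,x_n]$; under the standard normalization of the gcd they are equal, and in particular nonzero whenever $i\leq \operatorname{rank}(A)=\operatorname{rank}(B)$.

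Next I would derive $J_i(A)=J_i(B)$. By Cauchy--Binet again, write $c^{(i)}_k=\sum_j \alpha_{kj}\, a^{(i)}_j$ for suitable $\alpha_{kj}\in K[x_1,\ldots,x_n]$. Substituting $a^{(i)}_j = d_i(A)\, b^{(i)}_j$ and $c^{(i)}_k = d_i(B)\, e^{(i)}_k$ gives
\[
d_i(B)\, e^{(i)}_k = d_i(A)\sum_j \alpha_{kj}\, b^{(i)}_j.
\]
Since $d_i(A)=d_i(B)$ is a nonzero element of the integral domain $K[x_1,\ldots,x_n]$, cancellation yields $e^{(i)}_k=\sum_j \alpha_{kj}\, b^{(i)}_j$, so every reduced minor of $B$ lies in $J_i(A)$, hence $J_i(B)\subseteq J_i(A)$. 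Swapping the roles of $A$ and $B$ via $A=U^{-1}BV^{-1}$ gives the reverse inclusion.

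I do not anticipate any real obstacle here; the only point requiring a little care is the justification for cancelling $d_i(A)$, which is valid because $K[x_1,\ldots,x_n]$ is an integral domain and $d_i(A)\neq 0$ in the range $1\le i\le \operatorname{rank}(A)$, while both sides vanish trivially for larger $i$. Thus the argument is just a clean book-keeping application of Cauchy--Binet together with the symmetry provided by unimodularity.
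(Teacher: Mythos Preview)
Your argument is correct: the Cauchy--Binet formula together with the symmetry $A=U^{-1}BV^{-1}$ is exactly the standard way to obtain both $d_i(A)=d_i(B)$ and $J_i(A)=J_i(B)$, and your handling of the cancellation step (nonzero gcd in an integral domain for $i\le\operatorname{rank}(A)$, trivial for larger $i$) is fine.

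Note, however, that the paper does not give its own proof of this proposition; it merely states the result and cites \cite{Zheng2023New} as the source. So there is nothing in the present paper to compare your proof against. That said, the argument you wrote is precisely the one given in the cited reference, so your approach matches the intended proof.
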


 The Smith form for any multivariate polynomial matrix is defined as follows.

\begin{definition}
 Let $F \in K[x_1,\ldots,x_n]^{l \times m}$ with rank $r$, and $\Phi_{i}$ be a
 polynomial defined as follows:
 \[\Phi_{i}=\left\{\begin{array}{ll}
	\frac{d_{i}(F)}{d_{i-1}(F)}, & 1 \leq i \leq r; \\
	~~~~0, & r<i \leq l.
	\end{array}\right.\]
 Moreover, $\Phi_{i}$ satisfies the divisibility property $\Phi_{1} \mid  \Phi_{2} \mid  \cdots  \mid \Phi_{r}$. Then the {\em Smith form} of $F$ is given by
 \[\begin{pmatrix}
     {\rm diag}\{\Phi_1,\ldots, \Phi_r\} &  0_{r \times (m-r)}  \\
         0_{(l-r)\times r}    &  0_{(l-r)\times (m-r)}
     \end{pmatrix}.\]
\end{definition}

 \subsection{Localization}

 In this paper, we mainly focus on the equivalence problem between multivariate polynomial matrices and their Smith forms. To study this problem, we need some additional concepts and related results.

 \begin{definition}[{\cite{Youla1979Notes}}]
  Let $F \in K[x_1,\ldots,x_n]^{l \times m}$ be of full rank, then $F$ is said to be zero left prime (ZLP) if all the $l \times l$ minors of $F$ generate $K[x_1,\ldots,x_n]$.
 \end{definition}

 The following theorem is closely related to ZLP matrices. It was first proposed by \cite{Lin2001A} as a conjecture, known as the Lin-Bose conjecture, and later solved by \cite{Liu2014The,Pommaret2001,SV2004,Wang2004On} respectively using different methods.

 \begin{theorem}[\cite{Lin2001A}]\label{Lin-Bose-theorem}
  Let $F\in K[x_1,\ldots,x_n]^{l\times m}$ with rank $r$, where $1\leq r \leq l$. If $J_r(F) = K[x_1,\ldots,x_n]$, then there exist two polynomial matrices $G_1\in K[x_1,\ldots,x_n]^{l\times r}$ and $F_1\in K[x_1,\ldots,x_n]^{r\times m}$ such that $F = G_1 F_1$ and $F_1$ is a ZLP matrix.
 \end{theorem}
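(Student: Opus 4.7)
Write $R = K[x_1,\ldots,x_n]$ for brevity. The plan is to establish the factorization $F = G_1 F_1$ via a module-theoretic argument: first, show that the image $M = \mathrm{Im}(F) \subseteq R^l$ is a projective $R$-module of rank $r$; then invoke the Quillen--Suslin theorem to upgrade projectivity to freeness; and finally split the resulting short exact sequence to read off the factorization, with $F_1$ surjective (hence ZLP) and $G_1$ encoding the inclusion $R^r \cong M \hookrightarrow R^l$.

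For the projectivity step, view $F$ as a linear map $F : R^m \to R^l$ and set $M = \mathrm{Im}(F)$, a finitely generated $R$-submodule of $R^l$ of rank $r$. To prove $M$ projective it suffices to check that $M_\mathfrak{p}$ is free of rank $r$ over $R_\mathfrak{p}$ for every prime $\mathfrak{p}$ of $R$. The hypothesis $J_r(F) = R$ ensures that at any such $\mathfrak{p}$, at least one reduced $r \times r$ minor $b^{(r)}_j$ of $F$ is a unit in $R_\mathfrak{p}$. Combined with $\rank(F) = r$ (so that all $(r+1)\times(r+1)$ minors vanish), this unit property provides enough local control to identify $M_\mathfrak{p}$ with $R_\mathfrak{p}^r$.

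Once $M$ is known to be projective of rank $r$ over $R = K[x_1,\ldots,x_n]$, the Quillen--Suslin theorem gives $M \cong R^r$; fix such an isomorphism $\psi : R^r \xrightarrow{\sim} M$. The short exact sequence $0 \to \ker(F) \to R^m \xrightarrow{F} M \to 0$ then splits because $M$ is free. Define $F_1 : R^m \to R^r$ to be $\psi^{-1} \circ F$; it is surjective, and by the standard equivalence between surjective maps $R^m \twoheadrightarrow R^r$ and ZLP matrices (a surjection has a right inverse whose $r\times r$ minors witness, via Cauchy--Binet, that the $r\times r$ minors of $F_1$ generate $R$), the associated $F_1 \in R^{r\times m}$ is ZLP. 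Define $G_1 \in R^{l\times r}$ to be the matrix whose columns are $\psi(e_1),\ldots,\psi(e_r)$, i.e., the matrix of $R^r \xrightarrow{\psi} M \hookrightarrow R^l$. Then $G_1 F_1 = F$ by construction, yielding the desired factorization.

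The main obstacle is the local-freeness assertion in the first step. The easy case is when some $r \times r$ minor of $F$ is itself a unit in $R_\mathfrak{p}$: the corresponding submatrix is invertible over $R_\mathfrak{p}$, and the rank-$r$ condition forces every column of $F$ to lie in the $R_\mathfrak{p}$-span of the chosen $r$ columns (via vanishing of $(r+1)\times(r+1)$ minors), so that $M_\mathfrak{p} \cong R_\mathfrak{p}^r$. The harder case arises at primes $\mathfrak{p}$ with $d_r(F) \in \mathfrak{p}$, where no full-size minor of $F$ is a unit in $R_\mathfrak{p}$ yet the reduced minor $b^{(r)}_j$ still is; here one must carefully account for the common factor $d_r(F)$ before concluding local freeness of $M_\mathfrak{p}$. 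This delicate step is precisely where the published proofs of the Lin--Bose conjecture invest their effort, appealing to tools such as Suslin's stability theorem or Vaserstein-type constructions.
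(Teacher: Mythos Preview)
The paper does not supply its own proof of this statement: Theorem~\ref{Lin-Bose-theorem} is quoted as the resolved Lin--Bose conjecture, with proofs attributed to \cite{Liu2014The,Pommaret2001,SV2004,Wang2004On}. There is therefore no in-paper argument to compare your attempt against.

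On its own merits, your outline is a correct high-level reformulation. The factorization $F=G_1F_1$ with $F_1$ ZLP is indeed equivalent, modulo Quillen--Suslin, to freeness of $M=\operatorname{Im}(F)$: if $M\cong R^r$ then your steps (2)--(4) go through verbatim, and conversely any such factorization forces $G_1:R^r\to R^l$ to be injective of full column rank, so $M=G_1(R^r)\cong R^r$. Your treatment of the ``easy'' primes (those not containing $d_r(F)$, where some full $r\times r$ minor becomes a unit in $R_{\mathfrak p}$) is also correct: vanishing of the $(r+1)\times(r+1)$ minors lets one solve for the remaining columns over $R_{\mathfrak p}$ by Cramer's rule.

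The gap, which you yourself flag, is that the ``hard'' case---primes $\mathfrak p$ containing $d_r(F)$---is not argued at all but deferred to the literature. Since that case is the entire substance of the Lin--Bose theorem, what you have written is a reduction rather than a proof; and as the paper likewise only cites the result, neither the paper nor your attempt actually establishes it. One small inaccuracy worth correcting: the tools you name for the hard step (Suslin's stability theorem, Vaserstein-type constructions) are not the ones used in the cited proofs of Lin--Bose. Those tools appear elsewhere in the present paper, via \cite{Guan2024,Guan2024Further}, to prove Lemmas~\ref{Zheng-lemma} and~\ref{zheng-theorem}; the Lin--Bose proofs in \cite{Pommaret2001,SV2004,Wang2004On,Liu2014The} instead combine Quillen--Suslin with direct syzygy or matrix-completion arguments.
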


 In 1955, \cite{Serre1955} proposed a famous conjecture, which is referred to as Serre's conjecture in \citep{Lam1978}, claiming that projective modules over polynomial rings are free. This conjecture was independently solved by \cite{Quillen1976Projective} and \cite{Suslin1976Projective}, and the result is called the Quillen-Suslin theorem.

 \begin{theorem}[{Quillen-Suslin theorem}]\label{QS-theorem}
  Let $F \in K[x_1,\ldots,x_n]^{l \times m}$ be a ZLP matrix with $l<m$, then a unimodular matrix $U \in K[x_1,\ldots,x_n]^{m \times m}$ can be constructed such that $F$ is its first $l$ rows.
 \end{theorem}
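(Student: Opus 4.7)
The plan is to reduce the statement to the unimodular row completion problem and then induct on the number of variables $n$. First, I would observe that completing a ZLP matrix $F \in K[x_1,\ldots,x_n]^{l\times m}$ with $l < m$ to a unimodular $m \times m$ matrix is equivalent, by appending rows one at a time, to the following assertion: every unimodular row $(f_1,\ldots,f_m)$ over $K[x_1,\ldots,x_n]$ (that is, a row whose entries satisfy $\sum_{i=1}^{m} f_i g_i = 1$ for some $g_i$) can be realized as the first row of a unimodular matrix. Induction on $m - l$, combined with elementary row/column manipulations to propagate the completion to the full ZLP matrix, reduces the general theorem to this row-completion claim.

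For the base case $n = 1$, the ring $K[x_1]$ is a principal ideal domain. The unimodular row $(f_1,\ldots,f_m)$ can be reduced by Euclidean column operations to $(\gcd(f_1,\ldots,f_m), 0, \ldots, 0) = (1, 0, \ldots, 0)$, and inverting these elementary operations produces the desired unimodular completion.

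For the inductive step, I would follow Suslin's strategy, which rests on three pillars. First, using a Nagata-type change of variables $x_i \mapsto x_i + x_n^{N_i}$ for suitably large integers $N_i$, I would arrange that one entry of the unimodular row, say $f_1$, becomes monic in $x_n$ with coefficients in $R := K[x_1,\ldots,x_{n-1}]$. Second, Quillen's local-global patching theorem reduces completability of a unimodular row over $R[x_n]$ to its completability over $R_{\mathfrak{m}}[x_n]$ for every maximal ideal $\mathfrak{m}$ of $R$. Third, Horrocks' theorem asserts that over a local ring $(R_{\mathfrak{m}}, \mathfrak{m} R_{\mathfrak{m}})$, any unimodular row over $R_{\mathfrak{m}}[x_n]$ with a monic entry is completable. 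Combining these three with the inductive hypothesis applied over $R$ yields completability over $K[x_1,\ldots,x_n]$.

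The hard part will be establishing Horrocks' theorem and Quillen's patching lemma, which are really the analytic heart of the argument rather than formal consequences of the setup. Horrocks' theorem leverages the monic entry to control the structure of the projective module cut out by the unimodular row over the local polynomial ring, but translating that control into a concrete completing matrix requires delicate module-theoretic work, typically involving a careful induction on the $x_n$-degree of the monic entry. Quillen patching, while more formal, still demands a nontrivial gluing argument identifying locally trivial extended modules. Neither step follows from routine polynomial manipulations, and both are what genuinely bridge the gap between the PID base case and the multivariate setting.
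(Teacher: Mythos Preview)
The paper does not prove this theorem at all: it is stated in Section~\ref{sec_Pre} as the classical Quillen--Suslin theorem, with references to \cite{Quillen1976Projective}, \cite{Suslin1976Projective}, and the algorithmic treatment of \cite{Logar1992}, and is then used as a black box in the proof of Corollary~\ref{main-corollary}. So there is no ``paper's proof'' against which to compare your attempt.

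That said, your outline is the standard Suslin route and is essentially correct. Two small points deserve tightening. First, the reduction from a general ZLP matrix to the unimodular-row case is not literally ``append rows one at a time'': an arbitrary row of a ZLP matrix need not itself be unimodular. The clean reduction runs by induction on $l$: a right inverse $G$ of $F$ furnishes a unimodular column $g$ (any column of $G$), completability of $g$ lets you transform $F$ by a unimodular $V$ so that $FV$ has a standard basis vector as one column, and elementary operations then peel off a block $\begin{pmatrix} F'' & 0 \\ 0 & 1 \end{pmatrix}$ with $F''$ a smaller ZLP matrix. Second, your phrasing of Quillen patching is slightly loose: local completability over each $R_{\mathfrak m}[x_n]$ does not directly give global completability over $R[x_n]$; rather, Horrocks plus patching show the row is \emph{extended from} $R$, i.e.\ equivalent over $R[x_n]$ to its specialization at $x_n=0$, and it is to this specialized row over $R=K[x_1,\ldots,x_{n-1}]$ that the inductive hypothesis applies. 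You do invoke the inductive hypothesis at the end, so the logic is there, but the role of ``extended'' should be made explicit.
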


 Quillen and Suslin provided two existential proofs, which implies that Serre's conjecture is correct. However, the method to construct a unimodular matrix that meets the criterion stated in the Quillen-Suslin theorem remains unknown. It was not until 1992 that  \cite{Logar1992} first presented an algorithm for calculating the unimodular matrix in Theorem \ref{QS-theorem}, utilizing the key technology called localization proposed by \cite{Quillen1976Projective}. Now, let us introduce the concept of localization, which will play a significant role in the process of problem-solving in the paper.

\begin{definition}[{See Definition 1.4.4, Page 32, \cite{Greuel2008}}]\label{def_local}
 Let $R$ be a commutative ring with unity, where the unity is denoted by $1$.
\begin{enumerate}
  \item[(1)] A subset $S \subset R$ is called {\em multiplicatively closed} if $1\in S$ and $a,b\in S$ implies $ab\in S$.

  \item[(2)] Let $S \subset R$ be multiplicatively closed. We define the {\em localization} $R_S$ of $R$ w.r.t. $S$ as follows:
 \[R_S:=\left\{ \frac{f}{s} ~ \Big{|} ~ f\in R, s\in S\right\},\]
 where $f/s$ denotes the equivalence class of $(f,s)\in R\times S$ w.r.t. the following equivalence relation:
 \[ (f,s) \sim (f',s') \Longleftrightarrow \exists ~ s''\in S \text{ such that } s''(fs'-f's)=0.\]
\end{enumerate}
\end{definition}

\begin{remark}
 If $R$ in Definition \ref{def_local} is an integral domain (i.e., there are no zero divisors in $R$), then $(f,s) \sim (f',s')$ implies $fs'-f's=0$. It is straightforward to confirm that performing sums and products on $R_S$ using the same way as for ordinary fractions gives well-defined operations. Furthermore, under these sum and product operations, $R_S$ is a new commutative ring with unity that contains $R$.
\end{remark}

\begin{definition}
 Let $R$ be a commutative ring with unity, and $S_1,\ldots,S_r\subset R$ be multiplicatively closed. If for any $s_1\in S_1,\ldots,s_r\in S_r$, there exist $u_1,\ldots,u_r\in R$ such that $u_1s_1+\cdots+u_rs_r = 1$, then $S_1,\ldots,S_r$ are called {\em comaximal}.
\end{definition}

 \begin{example}
  Let $f_1,\ldots,f_r\in K[x_1,\ldots,x_n]\setminus \{0\}$ be distinct polynomials, and we construct the following sets
  \[ S_i:=\left\{f_i^t \mid t\in \mathbb{N}\right\}, ~~ i=1,\ldots,r,\]
  where $\mathbb{N}$ is the set of natural numbers that includes $0$. It is easy to check that $S_i$ is multiplicatively closed, where $i=1,\ldots,r$. In addition, $S_1,\ldots,S_r$ are comaximal if and only if $\langle f_1,\ldots,f_r \rangle = 1$ (i.e., the ideal generated by $f_1,\ldots,f_r$ is the unit ideal $K[x_1,\ldots,x_n]$).
 \end{example}

 \begin{lemma}[See Lemma 43, Page 28, {\cite{Yengui2015Con}}]\label{main-lem-yengui}
  Let $F\in R[y]^{l\times m}$, where $R$ is a commutative ring with unity and $y$ is a variable. Let  $S\subset R$ be multiplicatively closed, and $z$ be a new variable. If $F(y) \sim_{R_S[y]} F(0)$, then there is a polynomial $s\in S$ such that $F(y+sz) \sim_{R[y,z]} F(y)$.
 \end{lemma}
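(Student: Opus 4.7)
The plan is to exhibit explicitly the unimodular matrices realizing the equivalence $F(y+sz)\sim_{R[y,z]}F(y)$, via the Quillen substitution trick. Fix unimodular matrices $U(y)\in R_S[y]^{l\times l}$ and $V(y)\in R_S[y]^{m\times m}$ with $U(y)F(y)V(y)=F(0)$. Viewing $y\mapsto y+sz$ as a ring homomorphism $R_S[y]\to R_S[y,z]$ (for any $s\in R_S$), applying it to this identity and using that $F(0)$ is constant in $y$ gives $U(y+sz)F(y+sz)V(y+sz)=U(y)F(y)V(y)$, which rearranges to
\[
\hat U(y,z)\,F(y+sz)\,\hat V(y,z)=F(y),\qquad \hat U(y,z):=U(y)^{-1}U(y+sz),\ \ \hat V(y,z):=V(y+sz)V(y)^{-1}.
\]
The matrices $\hat U,\hat V$ are unimodular over $R_S[y,z]$ and satisfy $\hat U(y,0)=I$, $\hat V(y,0)=I$. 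The only remaining task is to choose $s\in S$ so that $\hat U,\hat V,\hat U^{-1},\hat V^{-1}$ all have entries in $R[y,z]$, for then unimodularity over $R[y,z]$ is automatic.

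To achieve this I would first work universally with an auxiliary variable. Set $\tilde U(y,z):=U(y)^{-1}U(y+z)$; this is unimodular over $R_S[y,z]$ and equals $I$ at $z=0$, so $\tilde U(y,z)=I+z\tilde W(y,z)$ for some $\tilde W(y,z)\in R_S[y,z]^{l\times l}$. Because $\tilde W$ has only finitely many $R_S$-coefficients, there exists $s_1\in S$ with $s_1\tilde W(y,z)\in R[y,z]^{l\times l}$. Running the same argument on $\tilde U(y,z)^{-1}=U(y+z)^{-1}U(y)$, on $\tilde V(y,z):=V(y+z)V(y)^{-1}$, and on $\tilde V(y,z)^{-1}$ produces $s_2,s_3,s_4\in S$; set $s:=s_1s_2s_3s_4\in S$.

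With this choice, $\hat U(y,z)=\tilde U(y,sz)=I+sz\,\tilde W(y,sz)=I+z\bigl(s\,\tilde W(y,sz)\bigr)$. Since $s\tilde W(y,z)\in R[y,z]^{l\times l}$, its specialization $s\tilde W(y,sz)$ also lies in $R[y,z]^{l\times l}$, and hence $\hat U(y,z)\in R[y,z]^{l\times l}$. The identical reasoning places $\hat U(y,z)^{-1}$, $\hat V(y,z)$, and $\hat V(y,z)^{-1}$ in polynomial matrices over $R[y,z]$; the products $\hat U\cdot \hat U^{-1}=I$ and $\hat V\cdot \hat V^{-1}=I$ then force the determinants to be units in $R[y,z]$, so $\hat U$ and $\hat V$ are unimodular over $R[y,z]$. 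Combined with $\hat U(y,z)\,F(y+sz)\,\hat V(y,z)=F(y)$, this yields $F(y+sz)\sim_{R[y,z]}F(y)$.

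The main obstacle is the bookkeeping in the second step: ensuring a single $s\in S$ simultaneously clears denominators for all four auxiliary matrices. Conceptually, however, the whole argument rests on the one crucial observation that $\tilde U(y,z)-I$ and $\tilde V(y,z)-I$ are divisible by $z$; this is precisely what allows the Quillen substitution $z\mapsto sz$ to absorb the $S$-denominators into $R[y,z]$.
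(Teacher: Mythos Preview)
Your proof is correct and follows the same Quillen substitution trick as the paper's appendix proof of Lemma~\ref{main-lem-1-multi} (the paper itself cites the present lemma from Yengui without giving a proof). The only noteworthy difference is that you also clear denominators for $\tilde U^{-1}$ and $\tilde V^{-1}$ and deduce unimodularity from $\hat U\hat U^{-1}=I$, whereas the appendix instead argues $\det(E)=1$ directly from the fact that units in $(K[x_1]_S)[x_2,\ldots,x_n]$ are constants in $K[x_1]_S$; your route is the one that works verbatim over an arbitrary commutative ring $R$.
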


 To make Lemma \ref{main-lem-yengui} applicable to this paper, we modify it and obtain the following result.

 \begin{lemma}\label{main-lem-1-multi}
  Let $F\in K[x_1,\ldots,x_n]^{l\times l}$, $S\subset K[x_1]$ is multiplicatively closed and $z$ is a new variable. If $F(x_1,\ldots,x_{n-1},x_n) \sim_{(K[x_1]_S)[x_2,\ldots,x_n]} F(x_1,\ldots,x_{n-1},0)$, then there exists a polynomial $s\in S$ such that
  \[ F(x_1,\ldots,x_{n-1},x_n+sz) \sim_{K[x_1,\ldots,x_{n-1},x_n,z]} F(x_1,\ldots,x_{n-1},x_n).\]
 \end{lemma}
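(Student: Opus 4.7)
The plan is to recognize Lemma \ref{main-lem-1-multi} as essentially a repackaging of Lemma \ref{main-lem-yengui} after identifying the correct ground ring and distinguished variable. The pivotal algebraic observation is that localization commutes with adjoining polynomial variables that are algebraically independent from the multiplicative set: since $S \subset K[x_1]$ involves only $x_1$, we have a canonical identification of rings
\[(K[x_1]_S)[x_2,\ldots,x_{n-1}] \;\cong\; K[x_1,\ldots,x_{n-1}]_S,\]
and consequently
\[(K[x_1]_S)[x_2,\ldots,x_n] \;\cong\; \bigl(K[x_1,\ldots,x_{n-1}]_S\bigr)[x_n].\]

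The first step is to set $R := K[x_1,\ldots,x_{n-1}]$ and regard $S$ as a multiplicatively closed subset of $R$ via the inclusion $K[x_1] \hookrightarrow R$. Taking $y := x_n$ to be the distinguished variable singled out in Lemma \ref{main-lem-yengui}, the hypothesis
\[F(x_1,\ldots,x_{n-1},x_n) \sim_{(K[x_1]_S)[x_2,\ldots,x_n]} F(x_1,\ldots,x_{n-1},0)\]
translates, through the ring identification above, into precisely the statement $F(y)\sim_{R_S[y]} F(0)$ demanded by Lemma \ref{main-lem-yengui}. Applying that lemma yields a polynomial $s\in S$ with $F(y+sz)\sim_{R[y,z]} F(y)$, and unravelling the notation gives the desired equivalence over $K[x_1,\ldots,x_{n-1}][x_n,z] = K[x_1,\ldots,x_n,z]$.

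There is no real obstacle here beyond bookkeeping; the analytic content is entirely contained in \cite{Yengui2015Con}. The only verification worth spelling out is that the ring isomorphism above genuinely transports matrix equivalences, which is routine because the two unimodular matrices witnessing $F\sim F(\cdot,0)$ on one side have entries that are identified with entries on the other side, and determinants, being units, remain units under the isomorphism. Hence the only content of the proof is the reduction itself.
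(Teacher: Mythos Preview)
Your proposal is correct and is exactly the observation the paper itself makes immediately after stating the lemma: that Lemma \ref{main-lem-1-multi} is Lemma \ref{main-lem-yengui} with $R = K[x_1,\ldots,x_{n-1}]$ and $y = x_n$, via the standard identification $(K[x_1]_S)[x_2,\ldots,x_{n-1}] \cong K[x_1,\ldots,x_{n-1}]_S$. The only difference is in presentation: rather than invoking Lemma \ref{main-lem-yengui} as a black box, the paper (in its appendix) reruns Yengui's argument in this concrete setting---writing out the unimodular matrices $U,V$ witnessing the equivalence, forming $E = U(x_1,\ldots,x_{n-1},x_n+z)\,U^{-1}$ and $H = V^{-1}\,V(x_1,\ldots,x_{n-1},x_n+z)$, and clearing denominators to produce $s\in S$---purely for self-containment. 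Your reduction is the cleaner route; the paper's reproof buys only explicitness.
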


 Lemma \ref{main-lem-1-multi} is essentially the same as that of Lemma \ref{main-lem-yengui}, except that we have substituted $(K[x_1]_S)[x_2,\ldots,x_n]$ for $R_S[y]$. Please see \ref{sec:appendix} for the detailed proof process.

 \subsection{Problem}

 In this paper, we focus on the following problem.

 \begin{problem}\label{mian-problem}
  Let $F\in K[x_1,\ldots,x_n]^{l\times m}$ with rank $r$ and $d_r(F)\in K[x_1]$, where $1 \leq r \leq l$. What is the necessary and sufficient condition for the equivalence of $F$ and its Smith form?
 \end{problem}

 To solve problem \ref{mian-problem}, we also need to introduce the concept of homomorphic mapping and two related important lemmas.

 \begin{definition}
  Let $p\in K[x_1]$ be an irreducible polynomial, then $K[x_1]/\langle p \rangle$ is a filed. We consider the following homomorphism
  \[\begin{array}{cccc}
    \phi:  & K[x_1,\ldots,x_n] &  \longrightarrow & (K[x_1]/\langle p \rangle)[x_2,\ldots,x_n] \\
           & \sum c_{i_2\cdots i_n}(x_1)\cdot x_2^{i_2}\cdots x_n^{i_n} & \longrightarrow & \sum \overline{c_{i_2\cdots i_n}(x_1)} \cdot x_2^{i_2}\cdots x_n^{i_n}, \\
   \end{array}\]
  where $c_{i_2\cdots i_n}(x_1)\in K[x_1]$ and $\overline{c_{i_2\cdots i_n}(x_1)}\in K[x_1]/\langle p \rangle$. This homomorphism can extend canonically to the homomorphism $\phi: K[x_1,\ldots,x_n]^{l\times l} \rightarrow (K[x_1]/\langle p \rangle)[x_2,\ldots,x_n]^{l\times l}$ by applying $\phi$ entry-wise.
 \end{definition}

 For convenience, we use $\overline{F}$ to denote the polynomial matrix $\phi(F)$ in $(K[x_1]/\langle p \rangle)[x_2,\ldots,x_n]^{l\times l}$.

 Based on the Suslin's stability theorem \citep{Park1995An,Suslin1977On} and a generalized theorem of Vaserstein \citep{Yengui2015Con}, \cite{Guan2024,Guan2024Further} extended the bivariate case in \citep{Zheng2023New} to the case of $n \geq 2$, and obtained the following results.

\begin{lemma}[{\cite{Guan2024Further}}]\label{Zheng-lemma}
 Let $F\in K[x_1,\ldots,x_n]^{l\times l}$, and $p\in K[x_1]$ be an irreducible polynomial. If $J_k(F)=K[x_1,\ldots,x_n]$ and $\rank(\overline{F}) = k$, then
 \[ F \sim_{K[x_1,\ldots,x_n]} \diag\{\underbrace{1,\ldots,1}_{k},p,\ldots,p\} \cdot G\]
 for some matrix $G\in K[x_1,\ldots,x_n]^{l\times l}$.
\end{lemma}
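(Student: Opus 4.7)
The plan is to find a single unimodular matrix $U \in GL_l(K[x_1,\ldots,x_n])$ whose left-multiplication clears the last $l - k$ rows of $F$ modulo $p$, i.e., such that every entry in rows $k+1,\ldots,l$ of $UF$ is divisible by $p$. Writing those bottom rows as $p\,H_2$ and the top $k$ rows as $H_1$, one then has $UF = \diag\{1,\ldots,1,p,\ldots,p\} \cdot G$ with $G = \bigl(\begin{smallmatrix} H_1 \\ H_2 \end{smallmatrix}\bigr)$, which immediately gives the claimed equivalence (with $V = I_l$).

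To produce such a $U$, I first work modulo $p$. Since $p$ is irreducible, $L := K[x_1]/\langle p\rangle$ is a field and $L[x_2,\ldots,x_n]$ is a polynomial ring over a field. The Bezout identity $\sum u_i b_i = 1$ witnessing $J_k(F) = K[x_1,\ldots,x_n]$ reduces modulo $p$ to $\sum \overline{u_i}\,\overline{b_i} = 1$; together with $\rank(\overline{F}) = k$ (which forces $\overline{d_k(F)} \neq 0$, so that the $\overline{b_i}$ are, up to a unit, the reduced $k\times k$ minors of $\overline{F}$), this shows $J_k(\overline{F}) = L[x_2,\ldots,x_n]$. Applying Theorem \ref{Lin-Bose-theorem} in its transposed form to $\overline{F}$ yields a factorization $\overline{F} = \overline{M}_1\,\overline{M}_2$ with $\overline{M}_1 \in L[x_2,\ldots,x_n]^{l\times k}$ whose $k\times k$ minors generate the unit ideal. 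Theorem \ref{QS-theorem} then completes $\overline{M}_1$ to the first $k$ columns of a unimodular matrix $\overline{W} \in GL_l(L[x_2,\ldots,x_n])$, so that $\overline{W}^{-1}\overline{F}$ has its bottom $l-k$ rows equal to zero.

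The remaining step is to lift $\overline{W}^{-1}$ to a unimodular $U$ over $K[x_1,\ldots,x_n]$. After rescaling one column by a unit of $L$ so that $\det(\overline{W}^{-1}) = 1$, Suslin's stability theorem represents $\overline{W}^{-1} \in SL_l(L[x_2,\ldots,x_n])$ as a product of elementary matrices, each of which lifts verbatim to an elementary matrix over $K[x_1,\ldots,x_n]$; the product of these lifts is the desired $U$. By construction the bottom $l-k$ rows of $UF$ vanish modulo $p$, which is exactly what was needed. The hardest part, I expect, is precisely this lifting: Suslin's stability in the form invoked requires $l \geq 3$, so the cases $l \leq 2$ demand a separate treatment, and this is where the generalized Vaserstein-type patching Lemma \ref{main-lem-1-multi} enters, converting an equivalence holding over a localization $K[x_1]_S$ into one over $K[x_1,\ldots,x_n]$ itself.
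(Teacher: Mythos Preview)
This lemma is not proved in the present paper; it is quoted from \cite{Guan2024Further}, and the paper only records that the argument there rests on ``the Suslin's stability theorem \ldots\ and a generalized theorem of Vaserstein.'' Your overall strategy---reduce modulo $p$, use Theorem \ref{Lin-Bose-theorem} and Theorem \ref{QS-theorem} over the polynomial ring $L[x_2,\ldots,x_n]$ with $L=K[x_1]/\langle p\rangle$ to bring $\overline{F}$ into a form whose last $l-k$ rows vanish, normalize the transforming matrix into $SL_l$, and then lift it through the surjection $K[x_1,\ldots,x_n]\twoheadrightarrow L[x_2,\ldots,x_n]$ by writing it as a product of elementary matrices via Suslin's stability---is precisely the mechanism the paper points to, and it is sound for $l\geq 3$.

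Where your sketch goes wrong is the final sentence. Lemma \ref{main-lem-1-multi} is a Quillen-type localization-and-shift lemma: given an equivalence over $(K[x_1]_S)[x_2,\ldots,x_n]$ for a multiplicative set $S\subset K[x_1]$, it produces one over $K[x_1,\ldots,x_n,z]$. In this paper it is used exclusively to establish Lemmas \ref{main-lem-2-multi} and \ref{biv-main-lemma-1-multi} (the ``third challenge''), where two \emph{comaximal localizations} of $K[x_1]$ are being glued. It cannot repair the obstruction you correctly flagged---lifting an $SL_2$ element across the \emph{quotient} map $K[x_1,\ldots,x_n]\to L[x_2,\ldots,x_n]$---because passing to $L$ is not a localization of $K[x_1]$, so the hypothesis of Lemma \ref{main-lem-1-multi} is never in play. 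The ``generalized theorem of Vaserstein'' that the paper cites is a different result from \cite{Yengui2015Con} (distinct from Lemma \ref{main-lem-yengui}), and it is not reproduced here. So your outline matches the intended proof for $l\geq 3$, but the patch you propose for the case $l=2$, $k=1$ (the only nontrivial small case) invokes the wrong tool and would not close the gap.
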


\begin{lemma}[{\cite{Guan2024Further}}]\label{zheng-theorem}
 Let $B = \diag\{p^{s_1},\ldots,p^{s_k},p^s,\ldots,p^s\} \cdot U \cdot \diag\{\underbrace{1,\ldots,1}_{k},p,\ldots,p\}$, where $p\in K[x_1]$ is an irreducible polynomial, $s_1 \leq \cdots \leq s_k \leq s$ and $U \in K[x_1,\ldots,x_n]^{l\times l}$ is a unimodular matrix. If $d_i(B) = p^{s_1+\cdots+s_i}$ and $J_i(B) = K[x_1,\ldots,x_n]$ for $i=1,\ldots,k$, then
 \[ B \sim_{K[x_1,\ldots,x_n]} \diag\{p^{s_1},\ldots,p^{s_k},p^{s+1},\ldots,p^{s+1}\}.\]
\end{lemma}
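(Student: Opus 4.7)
The plan is to exploit the factorization $B = D_L \cdot U \cdot D_R$ together with the hypotheses $d_i(B) = p^{s_1+\cdots+s_i}$ and $J_i(B) = K[x_1,\ldots,x_n]$ for $1 \leq i \leq k$ in order to construct unimodular matrices $P, Q \in K[x_1,\ldots,x_n]^{l\times l}$ with $PBQ$ equal to the target diagonal matrix $\diag\{p^{s_1},\ldots,p^{s_k},p^{s+1},\ldots,p^{s+1}\}$.

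First I would make the block structure explicit by partitioning $U$ as $\begin{pmatrix} U_{11} & U_{12} \\ U_{21} & U_{22} \end{pmatrix}$ with $U_{11}$ of size $k \times k$, which yields
\[B = \begin{pmatrix} D_1 U_{11} & p\, D_1 U_{12} \\ p^s U_{21} & p^{s+1} U_{22} \end{pmatrix},\]
where $D_1 = \diag\{p^{s_1},\ldots,p^{s_k}\}$. A direct minor computation from this block form, together with $\det(U)\in K\setminus\{0\}$, gives $d_i(B) = p^{s_1+\cdots+s_k+(s+1)(i-k)}$ for $k<i\leq l$, so $B$ and the target share the same Smith form; the substantive content of the lemma is to promote this to an actual matrix equivalence. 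Tracking the $p$-adic valuations of the $i\times i$ minors of $B$ shows that those drawn from rows and columns within $\{1,\ldots,k\}$ attain the minimal valuation $s_1+\cdots+s_i$, while any minor involving rows or columns from $\{k+1,\ldots,l\}$ contributes strictly larger powers of $p$. Consequently, the condition $J_i(B) = K[x_1,\ldots,x_n]$ translates, modulo $p$, to a zero-left-prime-type condition on the top-left $i\times k$ block of $\overline{U}$; for $i = k$ it forces $\overline{\det(U_{11})}$ to be a unit in $(K[x_1]/\langle p\rangle)[x_2,\ldots,x_n]$.

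With these ZLP-mod-$p$ conditions in hand, I would reduce $B$ to the target in two stages. Over the localization $K[x_1,\ldots,x_n]_S$ with $S = \{1,p,p^2,\ldots\}$, the factors $D_L$ and $D_R$ are invertible, so the equivalence $B \sim D_L D_R$ holds trivially. The ZLP-mod-$p$ conditions are precisely what allows one to combine the Quillen--Suslin theorem with the Suslin stability theorem to express the required local unimodular transformations as products of elementary matrices whose entries, after clearing by suitable elements of $S$, remain in $K[x_1,\ldots,x_n]$. Lemma \ref{main-lem-1-multi} then provides the Vaserstein-style lifting that converts this controlled local equivalence into a genuine equivalence over $K[x_1,\ldots,x_n]$, through a change of variable that absorbs the remaining $p$-denominators.

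The main obstacle I anticipate is compatibility of the lifted elementary operations with the outer diagonal factors: a generic unimodular change of basis, when passed through $D_L^{-1}$ or $D_R^{-1}$, typically introduces powers of $p$ in the denominators of individual entries. Controlling this will require an induction on $l-k$ combined with careful use of the hypothesis $s_k \leq s$, which ensures that the bottom block carries a strictly larger uniform $p$-exponent than anything the reduction can produce in the top block; this is what guarantees that the final $p$-exponent in the bottom-right block comes out as exactly $s+1$ rather than something smaller or larger.
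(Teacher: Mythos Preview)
The paper does not prove this lemma: it is quoted verbatim from \cite{Guan2024Further}, with only the remark that the argument there rests on Suslin's stability theorem and a generalized Vaserstein theorem. So there is no ``paper's own proof'' to compare against beyond that one-line description of the tools.

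Your proposal names the same tools, but the way you deploy them has a concrete gap. You localize at the single multiplicatively closed set $S=\{1,p,p^2,\ldots\}$, observe that $B\sim_{K[x_1,\ldots,x_n]_S} D_LD_R$ trivially, and then appeal to Lemma~\ref{main-lem-1-multi} to lift this back to $K[x_1,\ldots,x_n]$. But Lemma~\ref{main-lem-1-multi} only produces $s\in S$ (a power of $p$) and an equivalence $F(x_1,\ldots,x_{n-1},x_n+sz)\sim_{K[x_1,\ldots,x_n,z]}F(x_1,\ldots,x_n)$; specializing $z$ so that $x_n+sz=0$ would require $s$ to be a unit, which it is not. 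In the paper this obstruction is overcome in Lemma~\ref{main-lem-2-multi} by using \emph{two comaximal} multiplicatively closed sets; with a single $S$ the lifting simply does not close, and your phrase ``a change of variable that absorbs the remaining $p$-denominators'' does not correspond to any valid step.

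The preceding sentence in your outline has the same problem one level down: you assert that the ZLP-mod-$p$ conditions let one write the local transforming matrices as products of elementary matrices whose entries already lie in $K[x_1,\ldots,x_n]$. That is exactly the nontrivial content of the lemma, and nothing you have written explains why the conjugate $D_LU^{-1}D_L^{-1}$ (or any factorization of it) should be denominator-free. The ZLP-mod-$p$ information constrains $\overline{U}$ but does not by itself control the $p$-adic behaviour of an elementary factorization of $U$ after conjugation by $D_L$. Finally, the ``induction on $l-k$'' is announced but never set up: you do not identify what the inductive hypothesis is or how a single step reduces $l-k$. In short, the block analysis and the identification of the target Smith form are fine, but the passage from local equivalence at $p$ to global equivalence is where the real work lies, and your outline does not supply it.
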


\section{Necessary and Sufficient Condition}\label{sec_Biv}

 When $F\in K[x_1,x_2]^{l\times l}$ and $\det(F) = p^t$ with $p\in K[x_1]$ being an irreducible polynomial in Problem \ref{mian-problem}, \cite{Zheng2023New} first proved that the necessary and sufficient condition for this special case is that $J_i(F) = K[x_1,x_2]$ for $i=1,\ldots,l$. Subsequently, \cite{Guan2024,Guan2024Further} extended the above result to the case with $n$ variables. Based on the above-mentioned works, we assert that the necessary and sufficient condition for Problem \ref{mian-problem} is that $J_i(F) = K[x_1,\ldots,x_n]$ for $i=1,\ldots,r$. To prove the assertion, we first consider the case where $F$ is a square matrix with full rank, and then generalize it to the case where $F$ is a non-square and non-full rank matrix.

\subsection{Three challenges}

 Let $F\in K[x_1,\ldots,x_n]^{l\times l}$ and $\det(F)\in K[x_1]$. If $F$ and its Smith form are equivalent, it is easy to see that $J_i(F) = K[x_1,\ldots,x_n]$ for $i=1,\ldots,l$. However, proving the converse is considerably complex. Let us elucidate potential issues that may emerge through an illustrative example.

 \begin{example}\label{biv-example-idea}
  Let $F\in K[x_1,x_2,x_3]^{4\times 4}$, $\det(F) = p_1^6p_2^4p_3^2$, and $J_i(F) = K[x_1,x_2,x_3]$ for $i=1,\ldots,4$, where $p_1,p_2,p_3\in K[x_1]$ are distinct and irreducible polynomials. Without loss of generality, assume that the Smith form of $F$ is $\diag\{1,p_1p_2,p_1^2p_2p_3,p_1^3p_2^2p_3\}$.

  Since $J_1(F) = K[x_1,x_2,x_3]$ and $\rank(\overline{F}) = 1$, we apply Lemma \ref{Zheng-lemma} to extract $p_1$ from $F$ and obtain
 \begin{equation}\label{biv-idea-equ-1}
  F \sim_{K[x_1,x_2,x_3]} \diag\{1,p_1,p_1,p_1\} \cdot G_1,
 \end{equation}
 where $\overline{F}\in (K[x_1]/\langle p_1 \rangle)[x_2,x_3]^{4\times 4}$ and $G_1\in K[x_1,x_2,x_3]^{4\times 4}$. To continue extracting $p_1$ from $G_1$ using Lemma \ref{Zheng-lemma}, we must prove that $J_2(G_1) = K[x_1,x_2,x_3]$ and $\rank(\overline{G_1}) = 2$. This is the first challenge. Only by overcoming this challenge can we obtain
 \begin{equation}\label{biv-idea-equ-1-1}
  G_1 \sim_{K[x_1,x_2,x_3]} \diag\{1,1,p_1,p_1\} \cdot G_2,
 \end{equation}
 where $G_2\in K[x_1,x_2,x_3]^{4\times 4}$. Combining Equations \eqref{biv-idea-equ-1} and \eqref{biv-idea-equ-1-1}, we have
 \begin{equation}\label{biv-idea-equ-1-2}
  F \sim_{K[x_1,x_2,x_3]} \diag\{1,p_1,p_1,p_1\} \cdot U_1 \cdot \diag\{1,1,p_1,p_1\} \cdot G_2,
 \end{equation}
 where $U_1\in K[x_1,x_2,x_3]^{4\times 4}$ is a unimodular matrix. Let $B = \diag\{1,p_1,p_1,p_1\} \cdot U_1 \cdot \diag\{1,1,p_1,$ $p_1\}$. At this juncture, if $d_1(B) = 1$, $d_2(B) = p_1$ and $J_i(B) = K[x_1,x_2,x_3]$ for $i=1,2$, then it follows from
 Lemma \ref{zheng-theorem} that
  \begin{equation}\label{biv-idea-equ-2}
   B \sim_{K[x_1,x_2,x_3]} \diag\{1,p_1,p_1^2,p_1^2\}.
  \end{equation}
  Combining Equations \eqref{biv-idea-equ-1-2} and \eqref{biv-idea-equ-2}, we have
  \begin{equation}\label{biv-idea-equ-3}
   F \sim_{K[x_1,x_2,x_3]} \diag\{1,p_1,p_1^2,p_1^2\} \cdot G_3,
  \end{equation}
  where $G_3\in k[x_1,x_2,x_3]^{4\times 4}$. Now, we repeat the above computational process, extracting $p_1$ from $G_3$ and subsequently establishing the following equivalence relationship:
  \begin{equation}\label{biv-idea-equ-3-0}
   F \sim_{K[x_1,x_2,x_3]} \diag\{1,p_1,p_1^2,p_1^3\} \cdot G_4,
  \end{equation}
  where $G_4\in k[x_1,x_2,x_3]^{4\times 4}$.

  Let $D = \diag\{1,p_1,p_1^2,p_1^3\}$, then the second challenge is to prove that $d_i(F) = d_i(D) \cdot d_i(G_4)$ and $J_i(G_4) = K[x_1,x_2,x_3]$ for $i=1,\ldots,4$. Once we have solved this problem, we can repeat the above process to obtain
  \begin{equation}\label{biv-idea-equ-3-1}
   F \sim_{K[x_1,x_2,x_3]} \diag\{1,p_1,p_1^2,p_1^3\} \cdot V_1 \cdot \diag\{1,p_2,p_2,p_2^2\} \cdot V_2 \cdot \diag\{1,1,p_3,p_3\},
  \end{equation}
  where $V_1,V_2\in k[x_1,x_2,x_3]^{4\times 4}$ are two unimodular matrices.

  Finally, the third and most crucial challenge is to prove the equivalence between \[\diag\{1,p_1,p_1^2,p_1^3\} \cdot V_1 \cdot \diag\{1,p_2,p_2,p_2^2\} \text{ and } \diag\{1,p_1p_2,p_1^2p_2,p_1^3p_2^2\}.\]
  If we can prove, then by repeatedly applying this method to Equation \eqref{biv-idea-equ-3-1}, and we can deduce the result:
  \begin{equation}\label{biv-idea-equ-3-2}
   F \sim_{K[x_1,x_2,x_3]} \diag\{1,p_1p_2,p_1^2p_2p_3,p_1^3p_2^2p_3\}.
  \end{equation}
 \end{example}

 Below, we will sequentially address the three challenges presented in Example \ref{biv-example-idea}.

 \subsubsection{Solving the first challenge}

 Before we solve the first challenge, we introduce an important concept.

 \begin{definition}
  Suppose $F\in K[x_1,\ldots,x_n]^{l\times l}$ and $\det(F)\in K[x_1]$. Let  $p_1^{r_1}p_2^{r_2}\cdots p_t^{r_t}$ be the irreducible decomposition of $\det(F)$, where $p_1,p_2,\ldots,p_t\in K[x_1]$ are distinct and irreducible polynomials, and $r_1,r_2,\ldots,r_t$ are positive integers. Assume that the Smith form of $F$ is
  \begin{equation*}
   \left(
         \begin{array}{cccc}
           p_1^{s_{11}}p_2^{s_{21}}\cdots p_t^{s_{t1}} & &  & \\  &  p_1^{s_{12}}p_2^{s_{22}}\cdots p_t^{s_{t2}} &  & \\
           &  & \ddots & \\
            &  &  & p_1^{s_{1l}}p_2^{s_{2l}}\cdots p_t^{s_{tl}} \\
         \end{array}
       \right),
  \end{equation*}
  where $s_{ij}\in \mathbb{N}$, $s_{i1} \leq s_{i2} \leq \cdots \leq s_{il}$ and $\sum_{j=1}^{l}s_{ij} = r_i$ for $i=1,\ldots,t$. Then for each integer $i$, the diagonal matrix $\diag\{p_i^{s_{i1}},p_i^{s_{i2}},\ldots,p_i^{s_{il}}\}$ is called the Smith form of $F$ w.r.t. $p_i$.
 \end{definition}

 \begin{lemma}\label{Lemma_chal_1}
  Let $F\in K[x_1,\ldots,x_n]^{l\times l}$ with $\det(F)\in K[x_1]$, $J_i(F) = K[x_1,\ldots,x_n]$ for $i=1,\ldots,l$. Assume that $p_1\in K[x_1]$ is an irreducible polynomial such that $p_1\mid \det(F)$ and $\diag\{p_1^{s_{11}},p_1^{s_{12}},\ldots,p_1^{s_{1l}}\}$ is the Smith form of $F$ w.r.t. $p_1$. If there is an integer $s$ with $s_{1k} \leq s < s_{1(k+1)}$ such that
  \[ F \sim_{K[x_1,\ldots,x_n]} \diag\{p_1^{s_{11}},\ldots,p_1^{s_{1k}},p_1^{s},\ldots,p_1^{s}\} \cdot G_1\]
  for some matrix $G_1\in K[x_1,\ldots,x_n]^{l\times l}$, then we have
  \begin{enumerate}
    \item[(1)] $\gcd(p_1,d_i(G_1))=1$ and $J_i(G_1) = K[x_1,\ldots,x_n]$ for $i=1,\ldots,k$;

    \item[(2)] $\rank(\overline{G_1}) = k$, where $\overline{G_1} \in (K[x_1]/\langle p_1 \rangle)[x_2,\ldots,x_n]^{l\times l}$;

    \item[(3)] there exist a unimodular matrix $U\in K[x_1,\ldots,x_n]^{l\times l}$ and a matrix $G_2\in K[x_1,\ldots,x_n]^{l\times l}$ such that
        \[ F \sim_{K[x_1,\ldots,x_n]} \diag\{p_1^{s_{11}},\ldots,p_1^{s_{1k}},p_1^s,\ldots,p_1^s\} \cdot U \cdot \diag\{\underbrace{1,\ldots,1}_{k},p_1,\ldots,p_1\} \cdot G_2.\]
        In addition, let $B = \diag\{p_1^{s_{11}},\ldots,p_1^{s_{1k}},p_1^s,\ldots,p_1^s\} \cdot U \cdot \diag\{1,\ldots,1,p_1,\ldots,p_1\}$, then $d_i(B) = p_1^{s_{11}+\cdots+s_{1i}}$ and $J_i(B) = K[x_1,\ldots,x_n]$ for $=1,\ldots,k$.
  \end{enumerate}
 \end{lemma}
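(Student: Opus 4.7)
The proof hinges on the factorization identity that, for the diagonal matrix $D_1$ whose $j$th entry is $p_1^{s_j^*}$ (with $s_j^*:=s_{1j}$ for $j\leq k$ and $s_j^*:=s$ for $j>k$), every $i\times i$ minor of $D_1 G_1$ with row set $I$ and column set $J$ equals $p_1^{e_I}\cdot m_{I,J}$, where $e_I:=\sum_{j\in I}s_j^*$ and $m_{I,J}$ denotes the corresponding minor of $G_1$. Combined with Proposition \ref{lemma-reduced-2} applied to $F\sim_{K[x_1,\ldots,x_n]}D_1 G_1$ and the hypothesis on the $p_1$-Smith form of $F$, this identity drives all three parts.

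For part~(1), I will write $d_i(F)=p_1^{s_{11}+\cdots+s_{1i}}\cdot q_i$ with $\gcd(p_1,q_i)=1$. For $i\leq k$ the minimum of $e_I$ over $|I|=i$ is $s_{11}+\cdots+s_{1i}$, attained at $I=\{1,\ldots,i\}$; comparing $p_1$-valuations of $p_1^{e_I}m_{I,J}$ with that of $d_i(F)$ shows some $m_{\{1,\ldots,i\},J_0}$ has $p_1$-valuation zero, and simultaneously that $q_i\mid m_{I,J}$ for every $I,J$. This gives $\gcd(p_1,d_i(G_1))=1$. To show $J_i(G_1)=K[x_1,\ldots,x_n]$, I start from the B\'ezout relation coming from $J_i(F)=K[x_1,\ldots,x_n]$, namely $\sum u_{I,J}\,p_1^{e_I-(s_{11}+\cdots+s_{1i})}\,(m_{I,J}/q_i)=1$, which shows $\langle m_{I,J}/q_i:I,J\rangle=K[x_1,\ldots,x_n]$. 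Multiplying through by $d_i(G_1)/q_i$ (which is a polynomial, since $q_i\mid d_i(G_1)$ and the reverse divisibility $d_i(G_1)\mid q_i$ follows from the Binet-Cauchy-type relation $d_i(D_1)\,d_i(G_1)\mid d_i(D_1G_1)$) produces the desired B\'ezout expression for the reduced minors $m_{I,J}/d_i(G_1)$.

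Part~(2) is the most delicate. The lower bound $\rank(\overline{G_1})\geq k$ is immediate: reducing the B\'ezout identity from~(1) modulo $p_1$ exhibits a nonzero $k\times k$ minor of $\overline{G_1}$. For the upper bound I must show every $(k+1)\times(k+1)$ minor $m_{I,J}$ of $G_1$ is divisible by $p_1$. The valuation inequality $e_I+v_{p_1}(m_{I,J})\geq s_{11}+\cdots+s_{1(k+1)}$ (from $d_{k+1}(F)\mid\mathrm{minor}_{I,J}(D_1G_1)$), combined with the strict inequality $s<s_{1(k+1)}$, handles the cases where $\{1,\ldots,k\}\subseteq I$ directly; in the remaining cases the valuation bound alone is insufficient, and I expect this to be the main technical obstacle of the proof. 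The resolution will use the full determinantal-ideal equalities $I_r(D_1G_1)=p_1^{s_{11}+\cdots+s_{1r}}q_r\,K[x_1,\ldots,x_n]$ for $r\geq k+1$, which (via the decomposition $I_r(D_1G_1)=\sum_I p_1^{e_I}I_r^{(I)}(G_1)$) impose simultaneous constraints on the partial determinantal ideals $I_r^{(I)}(G_1)$; an integral-domain cancellation argument against these combined constraints then forces $p_1\mid m_{I,J}$ for the remaining index sets.

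For part~(3), with (1) and (2) in hand, Lemma \ref{Zheng-lemma} applies to $G_1$ at level $k$, yielding unimodular $U',V'\in K[x_1,\ldots,x_n]^{l\times l}$ and $G_2'\in K[x_1,\ldots,x_n]^{l\times l}$ with $G_1=U'\cdot\diag\{\underbrace{1,\ldots,1}_{k},p_1,\ldots,p_1\}\cdot G_2'\cdot V'$. Absorbing $V'$ into the equivalence yields $F\sim_{K[x_1,\ldots,x_n]}D_1\cdot U\cdot\diag\{1,\ldots,1,p_1,\ldots,p_1\}\cdot G_2$ with $U:=U'$ and $G_2:=G_2'V'$, producing the factorization stated in~(3). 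Setting $B=D_1\cdot U\cdot\diag\{1,\ldots,1,p_1,\ldots,p_1\}$ and writing $B_{ij}=p_1^{r_i+c_j}U_{ij}$ with $r_i=s_{1i}$ for $i\leq k$, $r_i=s$ for $i>k$, $c_j=0$ for $j\leq k$, $c_j=1$ for $j>k$, every $i\times i$ minor of $B$ factors as $p_1^{r_I+c_J}\cdot U_{I,J}$, where $U_{I,J}$ is the corresponding minor of $U$. For $i\leq k$ the minimum of $r_I+c_J$ is $s_{11}+\cdots+s_{1i}$, attained at $(I,J)=(\{1,\ldots,i\},\{1,\ldots,i\})$; invoking $J_i(G_1)=K[x_1,\ldots,x_n]$ from~(1) to produce a B\'ezout expression for the appropriate minors of $U$, a valuation-and-B\'ezout argument parallel to the one in~(1) then yields $d_i(B)=p_1^{s_{11}+\cdots+s_{1i}}$ and $J_i(B)=K[x_1,\ldots,x_n]$.
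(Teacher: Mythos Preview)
Your part~(1) is essentially correct and follows the same line as the paper: both pull out $p_1^{e_i}$ from the $i\times i$ minors of $D_1G_1$, compare with $d_i(F)=p_1^{e_i}\theta_i$, and conclude $d_i(G_1)=\theta_i$ is coprime to $p_1$ and $J_i(G_1)=K[x_1,\ldots,x_n]$.

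For part~(2), you correctly isolate the genuine obstacle: the valuation bound $e_I+v_{p_1}(m_{I,J})\geq e_{k+1}$ forces $p_1\mid m_{I,J}$ only when $I\supseteq\{1,\ldots,k\}$, and for the remaining $I$ the bound can be vacuous. Your proposed resolution (``an integral-domain cancellation argument against these combined constraints'') is not an argument; it is a hope. The determinantal-ideal equalities $I_r(D_1G_1)=p_1^{e_r}q_rR$ do not by themselves force $p_1\mid m_{I,J}$ for the bad row-sets, and you give no mechanism. The paper does not supply a self-contained argument here either; it notes the proof is ``almost the same as that of Lemma 3.7 in \cite{Zheng2023New}'' and omits it. So you have correctly located the hard step but not filled it.

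Your part~(3) departs from the paper and has a gap. You try to compute $d_i(B)$ and $J_i(B)$ directly from $B_{I,J}=p_1^{r_I+c_J}U_{I,J}$, and then ``invoke $J_i(G_1)=K[x_1,\ldots,x_n]$ to produce a B\'ezout expression for the appropriate minors of $U$.'' But unimodularity of $U$ already gives $\langle U_{I,J}\rangle=R$ for all $|I|=|J|=i$; what you actually need is that the minors $U_{I,J}$ with $r_I+c_J$ \emph{minimal} (namely $I=\{1,\ldots,i\}$ and $J\subseteq\{1,\ldots,k\}$) are not all divisible by $p_1$, and there is no evident way to extract that from $J_i(G_1)$. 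The paper avoids this entirely: it first shows, by the same computation as in~(1), that $d_i(G_2)=\theta_i$ and $J_i(G_2)=R$; then, setting $D:=BG_2\sim F$ and applying Cauchy--Binet to $D=BG_2$, it writes each reduced minor $h_t/d_i(D)$ as $\sum_j (f_{t_j}/p_1^{e_i})(g_{t_j}/\theta_i)$, so that $J_i(D)=R$ forces $\langle f_j/p_1^{e_i}\rangle=R$, whence $d_i(B)=p_1^{e_i}$ and $J_i(B)=R$ at once. This indirect route through $BG_2$ is the missing idea in your part~(3).
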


 \begin{remark}
  Lemma 3.5 in \citep{Zheng2023New} is a special case of the above lemma. Nonetheless, the previously employed proof method falls short in its applicability to this specific lemma. Hence, we introduce a new approach to validate it.
 \end{remark}

 \begin{proof}[Proof of Lemma {\rm \ref{Lemma_chal_1}}]
  Let $\sum_{j=1}^{i}s_{1j} = e_i$, where $i=1,\ldots,k$. Since $\diag\{p_1^{s_{11}},p_1^{s_{12}},\ldots,p_1^{s_{1l}}\}$ is the Smith form of $F$ w.r.t. $p_1$, we have
  \begin{equation}\label{Lu_equ-1}
   d_i(F) = p_1^{e_i}\theta_i,
  \end{equation}
  where $\theta_i\in K[x_1]$ satisfies that $\gcd(p_1,\theta_i) = 1$ for $i=1,\ldots,k$.

  (1) Let $A = \diag\{p_1^{s_{1}},\ldots,p_1^{s_{k}},p_1^{s},
  \ldots,p_1^{s}\} \cdot G_1$, then by the fact that $F\sim_{K[x_1,\ldots,x_n]} A$ we obtain
  \begin{equation}\label{Lu_equ-2}
   d_i(A) = d_i(F) \text{ and } J_i(A) = J_i(F) \text { for } i=1,\ldots,k.
  \end{equation}
  For any given integer $i$ with $1 \leq i \leq k$, assume that $h_{11},\ldots,h_{1\eta},h_{21},\ldots,h_{2\xi}\in K[x_1,\ldots,x_n]$ are all the $i\times i$ minors of $G_1$, where $h_{11},\ldots,h_{1\eta}$ are all the $i\times i$ minors of the submatrix formed by the first $i$ rows of $G_1$. Then,
  \[ p_1^{e_i}h_{11},\ldots,p_1^{e_i}h_{1\eta},
  p_1^{e_{21}}h_{21},\ldots,p_1^{e_{2\xi}}h_{2\xi}\]
  are all the $i\times i$ minors of $A$, where $e_{2j} \geq e_i$ for $j=1,\ldots,\xi$. Thus,
  \begin{equation}\label{Lu_equ-3}
   \begin{split}
   d_i(A) & = \gcd(p_1^{e_i}h_{11},\ldots,p_1^{e_i}h_{1\eta},
  p_1^{e_{21}}h_{21},\ldots,p_1^{e_{2\xi}}h_{2\xi}) \\ & =  p_1^{e_i} \cdot \gcd(h_{11},\ldots,h_{1\eta},
  p_1^{e_{21}-e_i}h_{21},\ldots,p_1^{e_{2\xi}-e_i}h_{2\xi}).
   \end{split}
  \end{equation}
  Combining Equations \eqref{Lu_equ-1}, \eqref{Lu_equ-2} and \eqref{Lu_equ-3}, we have
  \begin{equation}\label{Lu_equ-4}
   \theta_i = \gcd(h_{11},\ldots,h_{1\eta},
  p_1^{e_{21}-e_i}h_{21},\ldots,p_1^{e_{2\xi}-e_i}h_{2\xi}).
  \end{equation}
  It follows from $\theta_i \mid p_1^{e_{2j}-e_i}h_{2j}$ and $\gcd(p_1,\theta_i) = 1$ that $\theta_i \mid h_{2j}$, where $j=1,\ldots,\xi$. This implies that
  \begin{equation}\label{Lu_equ-5}
   \theta_i \mid \gcd(h_{11},\ldots,h_{1\eta},h_{21},\ldots,h_{2\xi}).
  \end{equation}
  It is obvious that $\gcd(h_{11},\ldots,h_{1\eta},h_{21},\ldots,h_{2\xi}) \mid \gcd(h_{11},\ldots,h_{1\eta},
  p_1^{e_{21}-e_i}h_{21},\ldots,p_1^{e_{2\xi}-e_i}h_{2\xi})$, then we have
  \begin{equation}\label{Lu_equ-5-1}
   \theta_i = \gcd(h_{11},\ldots,h_{1\eta},h_{21},\ldots,h_{2\xi}).
  \end{equation}
  It follows from Equation \eqref{Lu_equ-5-1} that $d_i(G_1) = \theta_i$ for $i=1,\ldots,k$. Therefore, $\gcd(p_1,d_i(G_1))=1$ for $i=1,\ldots,k$.  It is easy to see that
  \[\frac{h_{11}}{\theta_i},\ldots,\frac{h_{1\eta}}{\theta_i},
  p_1^{e_{21}-e_i}\frac{h_{21}}{\theta_i},\ldots,
  p_1^{e_{2\xi}-e_i}\frac{h_{2\xi}}{\theta_i}\]
  are all the $i\times i$ reduced minors of $A$. As $J_i(A) = K[x_1,\ldots,x_n]$ for $i=1,\ldots,k$, we obtain
  \[ \left\langle \frac{h_{11}}{\theta_i},\ldots,\frac{h_{1\eta}}{\theta_i},
  \frac{h_{21}}{\theta_i},\ldots,\frac{h_{2\xi}}{\theta_i} \right\rangle = K[x_1,\ldots,x_n].\]
  Therefore, $J_i(G_1) = K[x_1,\ldots,x_n]$ for $i=1,\ldots,k$.

  (2) The proof of $\rank(\overline{G_1}) = k$ is almost the same as that of Lemma 3.7 in \citep{Zheng2023New}, and it is omitted here.

  (3) Since $J_k(G_1) = K[x_1,\ldots,x_n]$ and $\rank(\overline{G_1}) = k$, by Lemma \ref{Zheng-lemma} we have
  \begin{equation}\label{Lu_equ-6}
   G_1 \sim_{K[x_1,\ldots,x_n]} \diag\{\underbrace{1,\ldots,1}_{k},p_1,\ldots,p_1\} \cdot G_2,
  \end{equation}
  where $G_2\in K[x_1,\ldots,x_n]^{l\times l}$. Moreover, by using the proof method in (1) above, we can also derive that
  \begin{equation}\label{Lu_equ-7}
   d_i(G_2) = d_i(G_1) = \theta_i \text{ and } J_i(G_2) = J_i(G_1) = K[x_1,\ldots,x_n] \text { for } i=1,\ldots,k.
  \end{equation}
  As $F \sim_{K[x_1,\ldots,x_n]} \diag\{p_1^{s_{11}},\ldots,p_1^{s_{1k}},p_1^{s},\ldots,p_1^{s}\} \cdot G_1$, we get
  \[ F \sim_{K[x_1,\ldots,x_n]} \diag\{p_1^{s_{11}},\ldots,p_1^{s_{1k}},p_1^s,\ldots,p_1^s\} \cdot U \cdot \diag\{1,\ldots,1,p_1,\ldots,p_1\} \cdot G_2,\]
  where $U\in K[x_1,\ldots,x_n]^{l\times l}$ is a unimodular matrix. Let $B = \diag\{p_1^{s_{11}},\ldots,p_1^{s_{1k}},p_1^s,\ldots,p_1^s\} \cdot U \cdot \diag\{1,\ldots,1,p_1,\ldots,p_1\}$ and $D = BG_2$, then
  \begin{equation}\label{Lu_equ-8}
   d_i(D) = d_i(F) = p_1^{e_i} \theta_i \text{ and } J_i(D) = J_i(F) = K[x_1,\ldots,x_n] \text { for } i=1,\ldots,k.
  \end{equation}
  For any given integer $i$ with $1 \leq i \leq k$, assume that $h_1,\ldots,h_\beta$, $f_1,\ldots,f_\beta$ and $g_1,\ldots,g_\beta$ are all the $i\times i$ minors of $D,B$ and $G_2$ respectively, where $\beta = \binom{l}{i}^2$. According to the special form of $B$, we can easily conclude that $p_1^{e_i}\mid f_j$ for $j=1,\ldots,\beta$. Based on the Cauchy-Binet formula, for any given $i\times i$ minor $h_t$ of $D$, there are $i\times i$ minors $f_{t_1},\ldots,f_{t_{\gamma_t}}$ of $B$ and $i\times i$ minors $g_{t_1},\ldots,g_{t_{\gamma_t}}$ of $G_2$ such that
  \begin{equation}\label{Lu_equ-9}
   h_t = \sum_{j=1}^{\gamma_t} f_{t_j} \cdot g_{t_j}.
  \end{equation}
  Dividing both sides of Equation \eqref{Lu_equ-9} by $d_i(D)$, we get
  \begin{equation}\label{Lu_equ-10}
   \frac{h_t}{d_i(D)} = \sum_{j=1}^{\gamma_t} \frac{f_{t_j}}{p_1^{e_i}} \cdot \frac{g_{t_j}}{\theta_i}.
  \end{equation}
  It follows from Equation \eqref{Lu_equ-10} that
   \begin{equation}\label{Lu_equ-11}
   \left\langle \frac{h_1}{d_i(D)},\ldots,\frac{h_\beta}{d_i(D)} \right\rangle \subseteq
   \left\langle \frac{f_1}{p_1^{e_i}},\ldots,\frac{f_\beta}{p_1^{e_i}}\right\rangle.
  \end{equation}
  Since $\frac{h_1}{d_i(D)},\ldots,\frac{h_\beta}{d_i(D)}$ are all the $i\times i$ reduced minors of $D$, we obtain
  \begin{equation}\label{Lu_equ-12}
   \left\langle \frac{f_1}{p_1^{e_i}},\ldots,\frac{f_\beta}{p_1^{e_i}}\right\rangle = K[x_1,\ldots,x_n].
  \end{equation}
  As $f_1,\ldots,f_\beta$ are all the $i\times i$ minors of $B$, we get
  \[d_i(B) = p_1^{e_i} \text{ and } J_i(B) = K[x_1,\ldots,x_n] \text { for } i=1,\ldots,k.\]
  The proof is completed.
 \end{proof}

 \begin{remark}\label{remark_lu-1}
  Combining the above lemma with Lemma \ref{zheng-theorem}, we obtain
  \[F \sim_{K[x_1,\ldots,x_n]} \diag\{p_1^{s_{11}},\ldots,p_1^{s_{1k}},p_1^{s+1},\ldots,p_1^{s+1}\} \cdot G_3\]
  for some matrix $G_3\in K[x_1,\ldots,x_n]^{l\times l}$. If $s+1 < s_{1(k+1)}$, then we can continue to utilize Lemma \ref{Lemma_chal_1} to extract $p_1$ from $G_3$. If $s+1 = s_{1(k+1)}$, then we compare the sizes of $s+1$ and $s_{1(k+2)}$ and utilize Lemma \ref{Lemma_chal_1} to factorize $G_3$. By repeating the above operations a finite number of times, we can ultimately arrive at the following equivalence relation:
  \[F \sim_{K[x_1,\ldots,x_n]} \diag\{p_1^{s_{11}},p_1^{s_{12}},\ldots,p_1^{s_{1l}}\} \cdot G\]
  for some matrix $G\in K[x_1,\ldots,x_n]^{l\times l}$.
 \end{remark}

 Lemma \ref{Lemma_chal_1} solves the first challenge raised in Example \ref{biv-example-idea}.

 \subsubsection{Solving the second challenge}

 \begin{lemma}\label{lemma-reduced-1}
  Let $F,F_1,F_2\in K[x_1,\ldots,x_n]^{l\times l}$ satisfy $F=F_1F_2$ and $\gcd(\det(F_1),\det(F_2))=1$, then
  \begin{enumerate}
    \item[(1)] $d_i(F) = d_i(F_1) \cdot d_i(F_2)$, where $i=1,\ldots,l$;

    \item[(2)] if $J_i(F) = K[x_1,\ldots,x_n]$, then $J_i(F_1) = J_i(F_2) = K[x_1,\ldots,x_n]$, where $i=1,\ldots,l$.
  \end{enumerate}
 \end{lemma}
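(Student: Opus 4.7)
The plan is to prove part (1) first and then derive part (2) from it via the Cauchy--Binet formula.

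For part (1), the easy direction $d_i(F_1)\,d_i(F_2)\mid d_i(F)$ falls out of Cauchy--Binet: every $i\times i$ minor of $F=F_1F_2$ is a sum of products of an $i\times i$ minor of $F_1$ and an $i\times i$ minor of $F_2$, so $d_i(F_1)d_i(F_2)$ divides each minor of $F$, hence divides $d_i(F)$. For the reverse divisibility I will use the adjugate trick. From $F\cdot \mathrm{adj}(F_2)=F_1F_2\cdot\mathrm{adj}(F_2)=\det(F_2)\,F_1$, comparing $i\times i$ minors gives
\[
 d_i(F)\;\Big{|}\;d_i\bigl(F\cdot \mathrm{adj}(F_2)\bigr)=d_i\bigl(\det(F_2)\,F_1\bigr)=\det(F_2)^{\,i}\,d_i(F_1).
\]
Symmetrically, using $\mathrm{adj}(F_1)\cdot F=\det(F_1)\,F_2$, one gets $d_i(F)\mid \det(F_1)^{\,i}\,d_i(F_2)$. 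Writing $d_i(F)=d_i(F_1)d_i(F_2)\,h$ for some polynomial $h$, the two divisibilities force $h\mid\det(F_2)^{\,i}$ and $h\mid\det(F_1)^{\,i}$; because $\gcd(\det(F_1),\det(F_2))=1$ implies $\gcd(\det(F_1)^{\,i},\det(F_2)^{\,i})=1$, the polynomial $h$ must be a unit, which closes the equality.

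For part (2), I will exploit part (1) together with Cauchy--Binet once more. For each $i\times i$ minor $f^{I}$ of $F$, Cauchy--Binet expresses it as $f^{I}=\sum_{K} f_1^{I,K}\,f_2^{K,I}$ where $f_1^{I,K}$, $f_2^{K,I}$ are $i\times i$ minors of $F_1,F_2$ respectively. Dividing both sides by $d_i(F)=d_i(F_1)\,d_i(F_2)$ yields
\[
 \frac{f^{I}}{d_i(F)}=\sum_{K}\frac{f_1^{I,K}}{d_i(F_1)}\cdot\frac{f_2^{K,I}}{d_i(F_2)},
\]
so every $i\times i$ reduced minor of $F$ lies in the product ideal $J_i(F_1)\cdot J_i(F_2)$. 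Consequently $J_i(F)\subseteq J_i(F_1)\cdot J_i(F_2)\subseteq J_i(F_k)$ for $k=1,2$; when $J_i(F)=K[x_1,\ldots,x_n]$ this forces $J_i(F_1)=J_i(F_2)=K[x_1,\ldots,x_n]$.

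The main obstacle is the reverse divisibility in part (1); the upper bounds coming straight out of Cauchy--Binet are too weak, and one really has to introduce the adjugate to convert the hypothesis $\gcd(\det F_1,\det F_2)=1$ into something that controls $d_i(F)$ from above. Once that identity is in place, everything else is bookkeeping with Cauchy--Binet.
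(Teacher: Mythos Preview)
Your proposal is correct and follows essentially the same route as the paper: Cauchy--Binet for the easy divisibility in part~(1), the adjugate identity $F\cdot\mathrm{adj}(F_2)=\det(F_2)\,F_1$ (and its symmetric counterpart) for the reverse divisibility, and then dividing the Cauchy--Binet expansion by $d_i(F)=d_i(F_1)d_i(F_2)$ to get $J_i(F)\subseteq J_i(F_1)\cap J_i(F_2)$ for part~(2). The only cosmetic difference is that the paper passes through individual minors and reduced minors of $F_1$ to conclude $\alpha_i\mid\det(F_2)^i$, whereas you go directly via $d_i(F)\mid d_i(\det(F_2)\,F_1)=\det(F_2)^i\,d_i(F_1)$; your version is slightly more streamlined but the substance is identical.
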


 \begin{proof}
  (1) Let $h_1^{(i)},\ldots,h_{\beta_i}^{(i)}$, $f_1^{(i)},\ldots,f_{\beta_i}^{(i)}$ and $g_1^{(i)},\ldots,g_{\beta_i}^{(i)}$ be all the $i\times i$ minors of $F,F_1$ and $F_2$ respectively, where $\beta_i = \binom{l}{i}^2$ and $i=1,\ldots,l$. For any given $i\times i$ minor $h_t^{(i)}$ of $F$, it follows from the Cauchy-Binet formula that there are $f_{t_1}^{(i)},\ldots,f_{t_{\eta_i}}^{(i)}$ and $g_{t_1}^{(i)},\ldots,g_{t_{\eta_i}}^{(i)}$ such that
  \begin{equation}\label{prop-equ-1}
   h_t^{(i)} = \sum_{j=1}^{\eta_i} f_{t_j}^{(i)} \cdot g_{t_j}^{(i)},
  \end{equation}
  where $\eta_i = \binom{l}{i}$. According to Equation \eqref{prop-equ-1}, we have
  \[ d_i(F_1) \mid h_t^{(i)} \text{ and } d_i(F_2) \mid h_t^{(i)},\]
  where $t=1,\ldots,\beta_i$. This implies that $d_i(F_1) \mid d_i(F)$ and $d_i(F_2) \mid d_i(F)$. Notice that $d_i(F_1) \mid \det(F_1)$, $d_i(F_2) \mid \det(F_2)$ and $\gcd(\det(F_1),\det(F_2))=1$, we have \[\gcd(d_i(F_1),d_i(F_2)) = 1 \text{ and } d_i(F_1) \cdot d_i(F_2) \mid d_i(F).\]
  Assume that $d_i(F) = \alpha_i \cdot d_i(F_1) \cdot d_i(F_2)$, where $\alpha_i\in K[x_1,\ldots,x_n]$. In the following we only need to prove that $\alpha_i =1$.

  Let $F_2^\ast\in K[x_1,\ldots,x_n]^{l\times l}$ be the adjoint matrix of $F_2$, then $F F_2^\ast = \det(F_2) \cdot F_1$. For any given $i\times i$ minor $f_s^{(i)}$ of $F_1$, it follows from the Cauchy-Binet formula that there are $h_{s_1}^{(i)},\ldots,h_{s_{\eta_i}}^{(i)}$ and $q_{s_1}^{(i)},\ldots,q_{s_{\eta_i}}^{(i)}$ such that
  \begin{equation}\label{prop-equ-2}
   (\det(F_2))^i \cdot f_s^{(i)} = \sum_{j=1}^{\eta_i} h_{s_j}^{(i)} \cdot q_{s_j}^{(i)},
  \end{equation}
  where $q_{s_1}^{(i)},\ldots,q_{s_{\eta_i}}^{(i)}$ are $i\times i$ minors of $F_2^\ast$, and $\eta_i = \binom{l}{i}$. It follows from Equation \eqref{prop-equ-2} that
  \[ d_i(F) \mid (\det(F_2))^i \cdot f_s^{(i)},\]
  where $s=1,\ldots,\beta_i$. As $d_i(F) = \alpha_i \cdot d_i(F_1)\cdot d_i(F_2)$ and $d_i(F_1)\mid f_s^{(i)}$, we get $\alpha_i \cdot d_i(F_2) \mid (\det(F_2))^i \cdot b_s^{(i)}$, where $b_s^{(i)}$ is the $i\times i$ reduced minor of $F_1$ w.r.t. $f_s^{(i)}$. It is obvious that
  \[ \alpha_i \mid (\det(F_2))^i \cdot b_s^{(i)}, ~ s=1,\ldots,\beta_i.\]
  Since $b_1^{(i)},\ldots,b_{\beta_i}^{(i)}$ are all the $i\times i$ reduced minor of $F_1$, we get
  \begin{equation}\label{prop-equ-3}
   \alpha_i \mid (\det(F_2))^i.
  \end{equation}
  Using the same argument as the above proof, we can easily get that
  \begin{equation}\label{prop-equ-4}
   \alpha_i \mid (\det(F_1))^i.
  \end{equation}
  Combining Equations \eqref{prop-equ-3} and \eqref{prop-equ-4}, we obtain
  \[\alpha_i \mid \gcd((\det(F_1))^i,(\det(F_2))^i).\]
  Due to $\gcd(\det(F_1),\det(F_2))=1$, it follows that $\alpha_i=1$, where $i=1,\ldots,l$.

  (2) Dividing both sides of Equation \eqref{prop-equ-1} by $d_i(F)$, we get
  \begin{equation}\label{prop-equ-5}
   \frac{h_t^{(i)}}{d_i(F)} = \sum_{j=1}^{\eta_i} \frac{f_{t_j}^{(i)}}{d_i(F_1)} \cdot \frac{g_{t_j}^{(i)}}{d_i(F_2)},
  \end{equation}
  where $\eta_i = \binom{l}{i}$, $t=1,\ldots,\beta_i$. Since $\frac{h_1^{(i)}}{d_i(F)},\ldots,\frac{h_{\beta_i}^{(i)}}{d_i(F)}$, $\frac{f_1^{(i)}}{d_i(F_1)},\ldots,\frac{f_{\beta_i}^{(i)}}{d_i(F_1)}$ and
  $\frac{g_1^{(i)}}{d_i(F_2)},\ldots,\frac{g_{\beta_i}^{(i)}}{d_i(F_2)}$ are all the $i\times i$ reduced minors of $F,F_1$ and $F_2$ respectively, we have
  \[ J_i(F) =  \left\langle\frac{h_1^{(i)}}{d_i(F)},\ldots,
  \frac{h_{\beta_i}^{(i)}}{d_i(F)}\right\rangle, ~
  J_i(F_1) =  \left\langle\frac{f_1^{(i)}}{d_i(F_1)},\ldots,
  \frac{f_{\beta_i}^{(i)}}{d_i(F_1)}\right\rangle, ~
  J_i(F_2) =  \left\langle\frac{g_1^{(i)}}{d_i(F_2)},\ldots,
  \frac{g_{\beta_i}^{(i)}}{d_i(F_2)}\right\rangle.\]
  It follows from Equation \eqref{prop-equ-5} that
  \[J_i(F) \subseteq J_i(F_1) \subseteq K[x_1,\ldots,x_n] \text{ and } J_i(F) \subseteq J_i(F_2) \subseteq K[x_1,\ldots,x_n].\]
  Since $J_i(F) = K[x_1,\ldots,x_n]$, we have $J_i(F_1) = J_i(F_2) = K[x_1,\ldots,x_n]$, where $i=1,\ldots,l$.
 \end{proof}

 Lemma \ref{lemma-reduced-1} solves the second challenge raised in Example \ref{biv-example-idea}.

 \subsubsection{Solving the third challenge}

 \begin{lemma}\label{main-lem-2-multi}
  Let $F\in K[x_1,\ldots,x_n]^{l\times l}$, and $S_1,S_2\subset K[x_1]$ are comaximal multiplicatively closed. Then $F(x_1,\ldots,x_n) \sim_{K[x_1,\ldots,x_n]} F(x_1,0,\ldots,0)$ if and only if $F(x_1,\ldots,x_n) \sim_{(K[x_1]_{S_i})[x_2,\ldots,x_n]} F(x_1,0,\ldots,0)$, where $i=1,2$.
 \end{lemma}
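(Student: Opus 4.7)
The necessity direction ($\Rightarrow$) is immediate: the inclusion $K[x_1,\ldots,x_n]\hookrightarrow (K[x_1]_{S_i})[x_2,\ldots,x_n]$ carries unimodular witnesses to unimodular matrices over the larger ring, since nonzero constants remain units. The interesting direction is sufficiency, which I plan to prove by induction on $n$, reducing one variable at a time beginning with $x_n$ and using Lemma~\ref{main-lem-1-multi} as the main technical tool. The base case $n=1$ is vacuous since $F(x_1,0,\ldots,0)=F(x_1)$ trivially.

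For the inductive step, I first observe that specializing $x_n=0$ in the hypothesis yields $F(x_1,\ldots,x_{n-1},0)\sim F(x_1,0,\ldots,0)$ over $(K[x_1]_{S_i})[x_2,\ldots,x_n]$, so composing with the original hypothesis gives
\[
F(x_1,\ldots,x_n)\sim_{(K[x_1]_{S_i})[x_2,\ldots,x_n]}F(x_1,\ldots,x_{n-1},0),\quad i=1,2.
\]
Lemma~\ref{main-lem-1-multi} then produces $s_i\in S_i$ and unimodular matrices $U_i,V_i\in K[x_1,\ldots,x_n,z]^{l\times l}$ satisfying
\[
U_i(x_1,\ldots,x_n,z)\,F(x_1,\ldots,x_{n-1},x_n+s_iz)\,V_i(x_1,\ldots,x_n,z)=F(x_1,\ldots,x_n).
\]

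The crucial patching step uses comaximality of $S_1,S_2$ to pick $u_1,u_2\in K[x_1]$ with $u_1s_1+u_2s_2=1$. Specializing $z\mapsto -u_2x_n$ in the $i=2$ identity yields $F(x_1,\ldots,x_{n-1},s_1u_1x_n)\sim_{K[x_1,\ldots,x_n]}F(x_1,\ldots,x_n)$, since $x_n-s_2u_2x_n=s_1u_1x_n$ and the determinants of $U_2,V_2$ are nonzero scalars that remain units after any polynomial substitution. Separately, first specializing $x_n\mapsto 0$ and then $z\mapsto u_1x_n$ in the $i=1$ identity yields $F(x_1,\ldots,x_{n-1},s_1u_1x_n)\sim_{K[x_1,\ldots,x_n]}F(x_1,\ldots,x_{n-1},0)$. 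Transitivity through the common intermediate matrix $F(x_1,\ldots,x_{n-1},s_1u_1x_n)$ gives $F(x_1,\ldots,x_n)\sim F(x_1,\ldots,x_{n-1},0)$ over $K[x_1,\ldots,x_n]$.

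To close the induction, note that $F(x_1,\ldots,x_{n-1},0)$ viewed as a matrix over $K[x_1,\ldots,x_{n-1}]$ inherits the two localized equivalences with $F(x_1,0,\ldots,0)$ by specializing $x_n=0$ in the original hypotheses, so the inductive hypothesis applied to it in one fewer variable gives $F(x_1,\ldots,x_{n-1},0)\sim F(x_1,0,\ldots,0)$ over $K[x_1,\ldots,x_{n-1}]$, and hence over $K[x_1,\ldots,x_n]$. Chaining the two equivalences finishes the inductive step. The main obstacle is the patching argument: engineering a pair of substitutions for $z$ that simultaneously route the two shifted equivalences, tied to the unrelated elements $s_1$ and $s_2$, through a common intermediary $F(x_1,\ldots,x_{n-1},s_1u_1x_n)$. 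This is possible precisely because comaximality provides the identity $1-s_2u_2=s_1u_1$, which makes the meeting point on both sides consistent.
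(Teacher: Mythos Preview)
Your proposal is correct and follows essentially the same approach as the paper: reduce one variable at a time by combining the localized equivalence $F(x_1,\ldots,x_n)\sim F(x_1,\ldots,x_{n-1},0)$ (obtained from the hypothesis by specialization) with Lemma~\ref{main-lem-1-multi}, then use comaximality to patch. The only cosmetic difference is in the patching itself: the paper introduces two auxiliary variables $z_1,z_2$, composes the two shifted equivalences to reach $F(x_1,\ldots,x_{n-1},x_n+s_1z_1+s_2z_2)\sim F(x_1,\ldots,x_n)$, and then specializes $z_i\mapsto -u_ix_n$ simultaneously; you instead keep a single $z$ and route both equivalences through the common intermediate $F(x_1,\ldots,x_{n-1},s_1u_1x_n)$, which is an equally valid (and slightly more economical) realization of the same Quillen-type patching.
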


 \begin{proof}
  As the necessity is trivial, we only need to prove the sufficiency.

  Since $F(x_1,\ldots,x_n) \sim_{(K[x_1]_{S_i})[x_2,\ldots,x_n]} F(x_1,0,\ldots,0)$, there exist two unimodular matrices $U_i$ and $V_i$ in $(K[x_1]_{S_i})[x_2,\ldots,x_n]^{l\times l}$ such that
  \begin{equation}\label{main-lem-equ-01-multi}
   F(x_1,\ldots,x_n) = U_i \cdot F(x_1,0,\ldots,0) \cdot V_i,
  \end{equation}
  where $i=1,2$. Setting the variable $x_n$ in Equation \eqref{main-lem-equ-01-multi} to zero, we have
  \begin{equation}\label{main-lem-equ-02-multi}
   F(x_1,\ldots,x_{n-1},0) = U_i(x_1,\ldots,x_{n-1},0) \cdot F(x_1,0,\ldots,0) \cdot V_i(x_1,\ldots,x_{n-1},0).
  \end{equation}
  It follows from Equation \eqref{main-lem-equ-02-multi} that
  \begin{equation}\label{main-lem-equ-03-multi}
   F(x_1,\ldots,x_{n-1},0) \sim_{(K[x_1]_{S_i})[x_2,\ldots,x_{n-1}]} F(x_1,0,\ldots,0).
  \end{equation}
  It is obvious that $(K[x_1]_{S_i})[x_2,\ldots,x_{n-1}] \subset (K[x_1]_{S_i})[x_2,\ldots,x_{n-1},x_n]$. This implies that
  \begin{equation*}\label{main-lem-equ-04-multi}
   F(x_1,\ldots,x_{n-1},x_n) \sim_{(K[x_1]_{S_i})[x_2,\ldots,x_n]} F(x_1,\ldots,x_{n-1},0).
  \end{equation*}
  According to Lemma \ref{main-lem-1-multi}, there are four unimodular matrices $C_1,D_1\in K[x_1,\ldots,x_n,z_1]^{l\times l}$, $C_2,$ $D_2\in K[x_1,\ldots,x_n,z_2]^{l\times l}$, and two polynomials $s_1\in S_1,s_2\in S_2$ such that
  \begin{gather}
   F(x_1,\ldots,x_{n-1},x_n+s_1z_1) = C_1(x_1,\ldots,x_n,z_1) \cdot F(x_1,\ldots,x_n) \cdot D_1(x_1,\ldots,x_n,z_1),\label{main-lem-equ-1-multi} \\
   F(x_1,\ldots,x_{n-1},x_n+s_2z_2) = C_2(x_1,\ldots,x_n,z_2) \cdot F(x_1,\ldots,x_n) \cdot D_2(x_1,\ldots,x_n,z_2),\label{main-lem-equ-2-multi}
  \end{gather}
  where $z_1,z_2$ are two new variables. Setting the variable $x_n$ in Equation \eqref{main-lem-equ-2-multi} to $x_n+s_1z_1$ and combining Equation \eqref{main-lem-equ-1-multi}, we obtain
  \begin{equation*}\label{main-lem-equ-4-multi}
   \begin{split}
   F(x_1,\ldots,x_{n-1},x_n+s_1z_1+s_2z_2)  ~ = & ~ ~  C_2(x_1,\ldots,x_{n-1},x_n+s_1z_1,z_2) \cdot C_1(x_1,\ldots,x_{n-1},x_n,z_1) \\ & ~ ~ \cdot F(x_1,\ldots,x_{n-1},x_n) \cdot D_1(x_1,\ldots,x_{n-1},x_n,z_1) \\ & ~ ~ \cdot D_2(x_1,\ldots,x_{n-1},x_n+s_1z_1,z_2).
   \end{split}
  \end{equation*}
  Let $U = C_2(x_1,\ldots,x_{n-1},x_n+s_1z_1,z_2) \cdot C_1(x_1,\ldots,x_{n-1},x_n,z_1)$ and $V = D_1(x_1,\ldots,x_{n-1},x_n,z_1) \cdot D_2(x_1,\ldots,x_{n-1},x_n+s_1z_1,z_2)$, then $U,V\in K[x_1,\ldots,x_{n-1},x_n,z_1,z_2]^{l\times l}$ are unimodular matrices. In addition, we have
  \begin{equation}\label{main-lem-equ-5-multi}
   F(x_1,\ldots,x_{n-1},x_n+s_1z_1+s_2z_2) = U \cdot F(x_1,\ldots,x_{n-1},x_n) \cdot V.
  \end{equation}
  Since $S_1,S_2$ are comaximal, there exist polynomials $u,v\in K[x_1]$ such that $us_1+vs_2=1$. Setting the variables $z_1$ and $z_2$ in Equation \eqref{main-lem-equ-5-multi} to $-ux_n$ and $-vx_n$ respectively, we have
  \begin{equation*}\label{main-lem-equ-6-multi}
   \begin{split}
   F(x_1,\ldots,x_{n-1},0) ~ = & ~ ~  U(x_1,\ldots,x_{n-1},x_n,-ux_n,-vx_n) \cdot F(x_1,\ldots,x_{n-1},x_n) \\ & ~ ~ \cdot V(x_1,\ldots,x_{n-1},x_n,-ux_n,-vx_n).
   \end{split}
  \end{equation*}
  It is obvious that $U(x_1,\ldots,x_{n-1},x_n,-ux_n,-vx_n)$ and $V(x_1,\ldots,x_{n-1},x_n,-ux_n,-vx_n)$ are two unimodular matrices in $K[x_1,\ldots,x_{n-1},x_n]^{l\times l}$. This implies that
  \begin{equation}\label{main-lem-equ-7-multi}
   F(x_1,\ldots,x_{n-1},x_n) \sim_{K[x_1,\ldots,x_{n-1},x_n]} F(x_1,\ldots,x_{n-1},0).
  \end{equation}
  Setting the variables $x_{n-1},x_n$ in Equation \eqref{main-lem-equ-01-multi} to zeros, we have
  \begin{equation}\label{main-lem-equ-8-multi}
   F(x_1,\ldots,x_{n-2},0,0) = U_i(x_1,\ldots,x_{n-2},0,0) \cdot F(x_1,0,\ldots,0) \cdot V_i(x_1,\ldots,x_{n-2},0,0).
  \end{equation}
  It follows from Equation \eqref{main-lem-equ-8-multi} that
  \begin{equation}\label{main-lem-equ-9-multi}
   F(x_1,\ldots,x_{n-2},0,0) \sim_{(K[x_1]_{S_i})[x_2,\ldots,x_{n-2}]} F(x_1,0,\ldots,0).
  \end{equation}
  Combining Equations \eqref{main-lem-equ-03-multi} and \eqref{main-lem-equ-9-multi}, we obtain
  \begin{equation}\label{main-lem-equ-10-multi}
   F(x_1,\ldots,x_{n-2},x_{n-1},0) \sim_{(K[x_1]_{S_i})[x_2,\ldots,x_{n-2},x_{n-1}]} F(x_1,\ldots,x_{n-2},0,0).
  \end{equation}
  Let $F_1 = F(x_1,\ldots,x_{n-1},0)$, then $F_1\in K[x_1,\ldots,x_{n-1}]^{l\times l}$. It follows from \eqref{main-lem-equ-10-multi} that
  \begin{equation*}\label{main-lem-equ-11-multi}
   F_1(x_1,\ldots,x_{n-2},x_{n-1}) \sim_{(K[x_1]_{S_i})[x_2,\ldots,x_{n-2},x_{n-1}]} F_1(x_1,\ldots,x_{n-2},0).
  \end{equation*}
  Using Lemma \ref{main-lem-1-multi} again, we obtain
  \begin{equation}\label{main-lem-equ-12-multi}
   F_1(x_1,\ldots,x_{n-1}) \sim_{K[x_1,\ldots,x_{n-1}]} F_1(x_1,\ldots,x_{n-2},0).
  \end{equation}
  It follows from $K[x_1,\ldots,x_{n-1}] \subset K[x_1,\ldots,x_{n-1},x_n]$ and Equations \eqref{main-lem-equ-7-multi} and \eqref{main-lem-equ-12-multi} that
  \begin{equation*}\label{main-lem-equ-13-multi}
   F(x_1,\ldots,x_n) \sim_{K[x_1,\ldots,x_n]} F(x_1,\ldots,x_{n-1},0) \sim_{K[x_1,\ldots,x_n]} F(x_1,\ldots,x_{n-2},0,0).
  \end{equation*}
  Repeating the above proof process a total of $n-1$ times, we can deduce that
  \begin{equation*}\label{main-lem-equ-14-multi}
   F(x_1,\ldots,x_n) \sim_{K[x_1,\ldots,x_n]} F(x_1,0,\ldots,0).
  \end{equation*}
  The proof is completed.
 \end{proof}

 \begin{lemma}\label{biv-main-lemma-1-multi}
  Let $f_1,\ldots,f_l,g_1,\ldots,g_l\in K[x_1]\setminus \{0\}$ satisfy $f_i\mid f_{i+1}$ and $g_i\mid g_{i+1}$ for $i=1,\ldots,l-1$, and $\gcd(f_l,g_l)=1$. If $U \in K[x_1,\ldots,x_n]^{l\times l}$ is a unimodular matrix, then
  \[ \diag\{f_1,\ldots,f_l\} \cdot U \cdot \diag\{g_1,\ldots,g_l\}
     \sim_{K[x_1,\ldots,x_n]} \diag\{f_1g_1,\ldots,f_lg_l\}.\]
 \end{lemma}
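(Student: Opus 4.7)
The plan is to use the localization machinery of Lemma \ref{main-lem-2-multi} to reduce the problem to the univariate case over the PID $K[x_1]$. First I would introduce the two multiplicatively closed sets $S_1 := \{f_l^t : t\in \mathbb{N}\}$ and $S_2 := \{g_l^t : t\in \mathbb{N}\}$ in $K[x_1]$. Since $\gcd(f_l,g_l)=1$, Bézout gives $u,v\in K[x_1]$ with $u f_l + v g_l = 1$, so $S_1,S_2$ are comaximal. Write $F := \diag\{f_1,\ldots,f_l\}\cdot U\cdot \diag\{g_1,\ldots,g_l\}$ and set $U_0 := U(x_1,0,\ldots,0)$, so that $F(x_1,0,\ldots,0) = \diag\{f_1,\ldots,f_l\}\cdot U_0\cdot \diag\{g_1,\ldots,g_l\}$.

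Next I would establish the two local equivalences $F \sim_{(K[x_1]_{S_i})[x_2,\ldots,x_n]} F(x_1,0,\ldots,0)$ for $i=1,2$. Over $(K[x_1]_{S_1})[x_2,\ldots,x_n]$, $f_l$ becomes a unit and hence so does every $f_j$ (as $f_j \mid f_l$), so $\diag\{f_1,\ldots,f_l\}$ is unimodular over this ring. Multiplying on the left by $\diag\{f_1,\ldots,f_l\}^{-1}$ and then by the polynomial unimodular $U^{-1}$ shows $F \sim \diag\{g_1,\ldots,g_l\}$ over the localization; the same argument applied to $F(x_1,0,\ldots,0)$ (with $U_0^{-1}$ in place of $U^{-1}$) gives $F(x_1,0,\ldots,0) \sim \diag\{g_1,\ldots,g_l\}$, and transitivity yields the desired equivalence. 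The case $i=2$ is symmetric: over $(K[x_1]_{S_2})[x_2,\ldots,x_n]$ each $g_j$ is a unit, so both $F$ and $F(x_1,0,\ldots,0)$ reduce to $\diag\{f_1,\ldots,f_l\}$.

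With both local equivalences in hand, Lemma \ref{main-lem-2-multi} upgrades them to $F \sim_{K[x_1,\ldots,x_n]} F(x_1,0,\ldots,0)$. It therefore suffices to show $F(x_1,0,\ldots,0) \sim \diag\{f_1g_1,\ldots,f_lg_l\}$. Since $U_0 \in K[x_1]^{l\times l}$ is unimodular, $\det(\diag\{f_1,\ldots,f_l\}\cdot U_0) = c\cdot f_1\cdots f_l$ for some $c\in K^{\ast}$, which is coprime to $\det(\diag\{g_1,\ldots,g_l\})=g_1\cdots g_l$ because $\gcd(f_l,g_l)=1$ and $f_i\mid f_l$, $g_i\mid g_l$. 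Lemma \ref{lemma-reduced-1}(1) then gives $d_i(F(x_1,0,\ldots,0)) = d_i(\diag\{f_1,\ldots,f_l\}\cdot U_0)\cdot d_i(\diag\{g_1,\ldots,g_l\}) = (f_1\cdots f_i)(g_1\cdots g_i)$, so the invariant factors are $f_ig_i$. Since these satisfy $f_ig_i \mid f_{i+1}g_{i+1}$, the Smith form of $F(x_1,0,\ldots,0)$ over the PID $K[x_1]$ is exactly $\diag\{f_1g_1,\ldots,f_lg_l\}$; thus $F(x_1,0,\ldots,0) \sim_{K[x_1]} \diag\{f_1g_1,\ldots,f_lg_l\}$, and a fortiori over $K[x_1,\ldots,x_n]$. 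Chaining the equivalences completes the proof.

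The main obstacle I expect is the bookkeeping for the two local equivalences — in particular, verifying that the left-multipliers used to move between $F$, $\diag\{g_1,\ldots,g_l\}$ (or $\diag\{f_1,\ldots,f_l\}$), and $F(x_1,0,\ldots,0)$ are genuinely unimodular over the appropriate localized polynomial ring. Once this is set up correctly, the remaining PID Smith-form computation via Lemma \ref{lemma-reduced-1} is routine.
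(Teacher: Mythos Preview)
Your proposal is correct and follows essentially the same route as the paper: the same comaximal sets $S_1=\{f_l^t\}$ and $S_2=\{g_l^t\}$, the same reduction via Lemma~\ref{main-lem-2-multi} to $F(x_1,0,\ldots,0)$, and the same PID Smith-form computation using Lemma~\ref{lemma-reduced-1}. The only cosmetic difference is that the paper passes through $\diag\{f_1g_1,\ldots,f_lg_l\}$ as the common intermediate in each localization (by factoring out $\diag\{f_i/f_j\cdot u_{ij}\}$), whereas you pass through $\diag\{g_1,\ldots,g_l\}$ and $\diag\{f_1,\ldots,f_l\}$ respectively; both choices work for the same reason.
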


 \begin{proof}
  We construct the following two sets
  \[ S_1 = \left\{f_l^{t_{1}} \mid t_{1}\in \mathbb{N}\right\} \text{ and }
  S_2 = \left\{g_l^{t_{2}} \mid t_{2}\in \mathbb{N}\right\}.\]
  It is straightforward to show that $S_1,S_2 \subset K[x_1]$ are multiplicatively closed. Moreover, it follows from $\gcd(f_l,g_l)=1$ that $S_1,S_2$ are comaximal.

  Let $A = \diag\{f_1,\ldots,f_l\} \cdot U \cdot \diag\{g_1,\ldots,g_l\}$ and $U = \left( u_{ij} \right)_{l\times l}$. On the one hand, we have
  \begin{equation}\label{main-lem-equ-3-1-multi}
   A = \left(
         \begin{array}{cccc}
           u_{11} & \frac{f_1}{f_2}\cdot u_{12} & \cdots  & \frac{f_1}{f_l} \cdot u_{1l} \\  \frac{f_2}{f_1}\cdot  u_{21} &  u_{22} & \cdots & \frac{f_2}{f_l}\cdot u_{2l} \\
           \vdots & \vdots & \ddots & \vdots \\
           \frac{f_l}{f_1}\cdot u_{l1} & \frac{f_l}{f_2}\cdot u_{l2} & \cdots & u_{ll} \\
         \end{array}
       \right) \cdot \diag\{f_1g_1,\ldots,f_lg_l\}.
  \end{equation}
  This implies that
  \begin{equation}\label{main-lem-equ-3-2-multi}
   A(x_1,\ldots,x_n) \sim_{(K[x_1]_{S_1})[x_2,\ldots,x_n]} \diag\{f_1g_1,\ldots,f_lg_l\}.
  \end{equation}
  Setting the variables $x_2,\ldots,x_n$ in Equation \eqref{main-lem-equ-3-1-multi} to zeros, and we have
  \begin{equation}\label{main-lem-equ-3-3-multi}
   A(x_1,0,\ldots,0) \sim_{(K[x_1]_{S_1})[x_2,\ldots,x_n]} \diag\{f_1g_1,\ldots,f_lg_l\}.
  \end{equation}
  Combining Equations \eqref{main-lem-equ-3-2-multi} and \eqref{main-lem-equ-3-3-multi}, we get
  \begin{equation*}\label{main-lem-equ-3-4-multi}
   A(x_1,\ldots,x_n)  \sim_{(K[x_1]_{S_1})[x_2,\ldots,x_n]} A(x_1,0,\ldots,0).
  \end{equation*}
  On the other hand, we have
  \begin{equation*}\label{main-lem-equ-3-5-multi}
   A = \diag\{f_1g_1,\ldots,f_lg_l\} \cdot \left(
         \begin{array}{cccc}
           u_{11} & \frac{g_2}{g_1}\cdot u_{12} & \cdots  & \frac{g_l}{g_1} \cdot u_{1l} \\  \frac{g_1}{g_2}\cdot  u_{21} &  u_{22} & \cdots & \frac{g_l}{g_2}\cdot u_{2l} \\
           \vdots & \vdots & \ddots & \vdots \\
           \frac{g_1}{g_l}\cdot u_{l1} & \frac{g_2}{g_l}\cdot u_{l2} & \cdots & u_{ll} \\
         \end{array}
       \right).
  \end{equation*}
  An argument similar to the above proof shows that
  \begin{equation*}\label{main-lem-equ-3-6-multi}
   A(x_1,\ldots,x_n)  \sim_{(K[x_1]_{S_2})[x_2,\ldots,x_n]} A(x_1,0,\ldots,0).
  \end{equation*}
  According to Lemma \ref{main-lem-2-multi}, we get
  \begin{equation}\label{main-lem-equ-3-7-multi}
   A(x_1,\ldots,x_n)  \sim_{K[x_1,\ldots,x_n]} A(x_1,0,\ldots,0).
  \end{equation}
  In the following we prove that the Smith form of $A(x_1,0,\ldots,0)$ is $\diag\{f_1g_1,\ldots,f_lg_l\}$. Obviously,
  \[A(x_1,0,\ldots,0) = \diag\{f_1,\ldots,f_l\} \cdot U(x_1,0,\ldots,0) \cdot \diag\{g_1,\ldots,g_l\}.\]
  Since $U \in K[x_1,\ldots,x_n]^{l\times l}$ is a unimodular matrix, $\det(U)$ is a unit in $k$. Without loss of generality, assume that $\det(U) = 1$. Then, we have $\det(U(x_1,0,\ldots,0)) = 1$. Let $B(x_1) = \diag\{f_1,\ldots,f_l\} \cdot U(x_1,0,\ldots,0)$, then
  \[B(x_1) \sim_{K[x_1]} \diag\{f_1,\ldots,f_l\}.\]
  It follows from Proposition \ref{lemma-reduced-2} that
  \[d_i(B(x_1)) = f_1\cdots f_i, ~ i=1,\ldots,l.\]
  Since $A(x_1,0,\ldots,0)= B(x_1) \cdot \diag\{g_1,\ldots,g_l\}$, by Lemma \ref{lemma-reduced-1} we have
  \[ d_i(A(x_1,0,\ldots,0)) = (f_1g_1)\cdots (f_ig_i), ~ i=1,\ldots,l.\]
  It follows that the Smith form of $A(x_1,0)$ is $\diag\{f_1g_1,\ldots,f_lg_l\}$. By the fact that $K[x_1]$ is a principal ideal domain, we obtain
  \begin{equation}\label{main-lem-equ-3-8-multi}
   A(x_1,0,\ldots,0)  \sim_{K[x_1]} \diag\{f_1g_1,\ldots,f_lg_l\}.
  \end{equation}
  Combining Equations \eqref{main-lem-equ-3-7-multi} and \eqref{main-lem-equ-3-8-multi}, and $K[x_1] \subset K[x_1,\ldots,x_n]$, we infer that
  \begin{equation*}\label{main-lem-equ-3-9-multi}
   A(x_1,\ldots,x_n)  \sim_{K[x_1,\ldots,x_n]} \diag\{f_1g_1,\ldots,f_lg_l\}.
  \end{equation*}
  The proof is completed.
 \end{proof}

 Lemma \ref{biv-main-lemma-1-multi} solves the third and most crucial challenge raised in Example \ref{biv-example-idea}.

\subsection{Main results}

 Now, we present the first important result in this paper.

 \begin{theorem}\label{main-theorem-1}
  Let $F\in K[x_1,\ldots,x_n]^{l\times l}$ and $\det(F)\in K[x_1]$, then $F$ is equivalent to its Smith form if and only if $J_i(F) = K[x_1,\ldots,x_n]$ for $i=1,\ldots,l$.
 \end{theorem}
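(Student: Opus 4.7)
For the necessity direction I would just observe that the Smith form $S=\diag\{\Phi_1,\ldots,\Phi_l\}$ has $\Phi_1\cdots\Phi_i = d_i(S)$ as one of its $i\times i$ minors, so dividing by $d_i(S)$ produces $1$ among the reduced minors, whence $J_i(S)=K[x_1,\ldots,x_n]$; Proposition \ref{lemma-reduced-2} then transfers this to $F$. For sufficiency I would factor $\det(F)=p_1^{r_1}\cdots p_t^{r_t}$ into distinct irreducibles of $K[x_1]$ and denote by $D_j=\diag\{p_j^{s_{j1}},\ldots,p_j^{s_{jl}}\}$ the Smith form of $F$ with respect to $p_j$, so that $\Phi_i=\prod_j p_j^{s_{ji}}$. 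The plan is to carry out the three-stage strategy previewed in Example \ref{biv-example-idea}: extract the prime powers one prime at a time, and then consolidate them into a single diagonal.

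To extract $p_1$, starting at $k=0$ and $s=s_{11}$, I would iterate Lemma \ref{Lemma_chal_1} followed by Lemma \ref{zheng-theorem}, exactly as sketched in Remark \ref{remark_lu-1}, driving the exponent $s$ up through the chain $s_{11}\le\cdots\le s_{1l}$. This produces $F\sim_{K[x_1,\ldots,x_n]} D_1\cdot G^{(1)}$ for some $G^{(1)}\in K[x_1,\ldots,x_n]^{l\times l}$. Since $\gcd(\det(D_1),\det(G^{(1)}))=1$, Lemma \ref{lemma-reduced-1} gives $J_i(G^{(1)})=K[x_1,\ldots,x_n]$ for all $i$ and shows that $\det(G^{(1)})\in K[x_1]$, so the same extraction applies verbatim to $G^{(1)}$ with $p_2$, then $p_3$, and so on. After $t$ rounds the leftover $G^{(t)}$ has determinant in $K^\ast$, hence is unimodular and can be absorbed into the preceding bridge matrix, giving
\[
F\sim_{K[x_1,\ldots,x_n]} D_1\,U_1\,D_2\,U_2\cdots U_{t-1}\,D_t
\]
with each $U_j$ unimodular.

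To collapse this product into a single diagonal, I would iterate Lemma \ref{biv-main-lemma-1-multi} on adjacent blocks from left to right. Setting
\[
E_j:=\diag\Bigl\{\prod_{i\le j}p_i^{s_{i1}},\ldots,\prod_{i\le j}p_i^{s_{il}}\Bigr\},
\]
at the $j$-th merge the two diagonals $E_j$ and $D_{j+1}$ have coprime determinants (disjoint prime supports) and each satisfies the divisibility chain $s_{i1}\le\cdots\le s_{il}$ for every participating prime, so the hypotheses of Lemma \ref{biv-main-lemma-1-multi} are met; the resulting equivalence replaces $E_j\,U_j\,D_{j+1}$ by $E_{j+1}$, and the leftover unimodular conjugator is absorbed into the next bridge $U_{j+1}$. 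After $t-1$ merges the product becomes $\diag\{\Phi_1,\ldots,\Phi_l\}$, the Smith form of $F$.

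The main obstacle I expect is not any single lemma but the bookkeeping that links them: at each iteration of Lemma \ref{Lemma_chal_1} one must certify that the updated factor still satisfies $J_k(\cdot)=K[x_1,\ldots,x_n]$ and that its reduction modulo the current prime has rank exactly $k$, and at each application of Lemma \ref{biv-main-lemma-1-multi} one must verify the divisibility chains and coprimality of the remaining prime supports. These invariants are propagated precisely by conclusions (1)--(3) of Lemma \ref{Lemma_chal_1} together with Lemma \ref{lemma-reduced-1}, so the induction closes; the real work lies in tracking them carefully across the nested loops (iterating the exponent $s$ within a fixed prime, then iterating over the primes $p_1,\ldots,p_t$, and finally iterating the consolidation merges).
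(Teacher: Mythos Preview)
Your proposal is correct and follows essentially the same approach as the paper: the same three key lemmas (Lemma~\ref{Lemma_chal_1} with Remark~\ref{remark_lu-1}, Lemma~\ref{lemma-reduced-1}, and Lemma~\ref{biv-main-lemma-1-multi}) are invoked in the same roles, with the only difference being organizational---the paper packages the extraction-and-merge loop as an induction on the number $t$ of distinct primes (split off $p_1$, apply the inductive hypothesis to the cofactor, then merge once via Lemma~\ref{biv-main-lemma-1-multi}), whereas you unroll this into an explicit product $D_1U_1\cdots U_{t-1}D_t$ followed by $t-1$ left-to-right merges. One minor bookkeeping slip: the iteration of Lemma~\ref{Lemma_chal_1} should start from the trivial factorization $F=I_l\cdot F$, i.e.\ with $s=0$ and $k$ equal to the number of indices with $s_{1i}=0$, rather than ``$k=0$ and $s=s_{11}$''.
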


 \begin{proof}
 {\em Necessity}. Let $H=\diag\{h_1,\ldots,h_l\} \in K[x_1,\ldots,x_n]^{l\times l}$ be the Smith form of $F$, then $J_i(H) = K[x_1,\ldots,x_n]$ for $i=1,\ldots,l$. If $F$ is equivalent to $H$, then by Proposition \ref{lemma-reduced-2} we have $J_i(F) = J_i(H)$ for $i=1,\ldots,l$.

  {\em Sufficiency}. Let $\det(F)=p_1^{r_1}p_2^{r_2}\cdots p_t^{r_t}$, where $p_1,p_2,\ldots,p_t\in k[x_1]$ are distinct and irreducible polynomials, and $r_1,r_2,\ldots,r_t$ are positive integers. In the following, we will complete the proof by induction on $t$. When $t=1$, under the condition that $J_i(F)= k[x_1,\ldots,x_n]$ for $i=1,\ldots,l$, $F$ is equivalent to its Smith form by Remark \ref{remark_lu-1}. Now assume that the proposition ``if $J_i(F)= k[x_1,\ldots,x_n]$ for $i=1,\ldots,l$, then $F$ is equivalent to its Smith form'' holds true for $t \leq N-1$. When $t=N$, by repeatedly applying Lemma \ref{Lemma_chal_1}, we can obtain two polynomial matrices $F_1,F_2\in k[x_1,\ldots,x_n]^{l\times l}$ such that
  \[ F = F_1 \cdot F_2, \text{ where } \det(F_1) = p_1^{r_1} \text{ and } \det(F_2) = p_2^{r_2}\cdots p_N^{r_N}.\]
  Based on Lemma \ref{lemma-reduced-1}, $J_i(F_1)= J_i(F_2)= k[x_1,\ldots,x_n]$, where $i=1,\ldots,l$. Let $\diag\{\Phi_1,\ldots,\Phi_l\}$ and $\diag\{\Psi_1,\ldots,\Psi_l\}$ be the Smith forms of $F_1$ and $F_2$ respectively, where $\Phi_j\mid \Phi_{j+1}$ and $\Psi_j\mid \Psi_{j+1}$, $j=1,\ldots,l-1$. By inductive hypothesis, we have
  \[ F_1 \sim_{k[x_1,\ldots,x_n]} \diag\{\Phi_1,\ldots,\Phi_l\} \text{ and }
     F_2 \sim_{k[x_1,\ldots,x_n]} \diag\{\Psi_1,\ldots,\Psi_l\}.\]
  It follows that
  \[ F \sim_{k[x_1,\ldots,x_n]} \diag\{\Phi_1,\ldots,\Phi_l\} \cdot U \cdot \diag\{\Psi_1,\ldots,\Psi_l\},\]
  where $U\in k[x_1,\ldots,x_n]^{l\times l}$ is a unimodular matrix. According to Lemma \ref{biv-main-lemma-1-multi}, we obtain
  \[ F \sim_{k[x_1,\ldots,x_n]} \diag\{\Phi_1\Psi_1,\ldots,\Phi_l\Psi_l\}.\]
  It follows from Proposition \ref{lemma-reduced-2} that
  \[ d_i(F) = (\Phi_1\Psi_1)\cdots (\Phi_i\Psi_i), ~ i=1,\ldots,l.\]
  This implies that $\diag\{\Phi_1\Psi_1,\ldots,\Phi_l\Psi_l\}$ is the Smith form of $F$. Therefore, the above proposition holds for $t=N$.
 \end{proof}

 Next, we generalize Theorem \ref{main-theorem-1} to the case of non-square and non-full rank matrices.

 \begin{corollary}\label{main-corollary}
  Let $F\in K[x_1,\ldots,x_n]^{l\times m}$ with rank $r$ and $d_r(F)\in K[x_1]$, where $1 \leq r \leq l$. Then $F$ is equivalent to its Smith form if and only if $J_i(F) = K[x_1,\ldots,x_n]$ for $i=1,\ldots,r$.
 \end{corollary}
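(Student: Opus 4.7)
The plan is to prove necessity immediately from Proposition \ref{lemma-reduced-2} (the Smith form is diagonal and its reduced minors are all $1$, so $J_i$ equals the unit ideal), and for sufficiency to reduce the non-square, non-full-rank case to the square full-rank case handled by Theorem \ref{main-theorem-1}. The reduction will employ the Lin-Bose theorem (Theorem \ref{Lin-Bose-theorem}) and the Quillen-Suslin theorem (Theorem \ref{QS-theorem}) twice, once to compress columns and once to compress rows, in order to peel $F$ down to an $r\times r$ full-rank core.

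First, because $J_r(F)=K[x_1,\ldots,x_n]$, I would invoke Lin-Bose to factor $F=G_1F_1$ with $F_1\in K[x_1,\ldots,x_n]^{r\times m}$ ZLP and $G_1\in K[x_1,\ldots,x_n]^{l\times r}$. Quillen-Suslin then produces a unimodular $V\in K[x_1,\ldots,x_n]^{m\times m}$ with $F_1V=(I_r,\,0)$, so that $FV=(G_1,\,0)$. Proposition \ref{lemma-reduced-2} gives $d_i(G_1)=d_i(F)$ and $J_i(G_1)=J_i(F)$ for $i\le r$. Now I transpose and repeat: the $r\times r$ minors of $G_1^T$ coincide with those of $G_1$, so $J_r(G_1^T)=K[x_1,\ldots,x_n]$, and a second application of Lin-Bose yields $G_1^T=AB$ with $B\in K[x_1,\ldots,x_n]^{r\times l}$ ZLP. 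A second application of Quillen-Suslin to $B$ produces a unimodular $W\in K[x_1,\ldots,x_n]^{l\times l}$ with $WB^T=\left(\begin{smallmatrix}I_r\\ 0\end{smallmatrix}\right)$, whence $WG_1=\left(\begin{smallmatrix}A^T\\ 0\end{smallmatrix}\right)$. Combining,
\[
WFV \;=\; \begin{pmatrix} A^T & 0\\ 0 & 0 \end{pmatrix}, \qquad A^T\in K[x_1,\ldots,x_n]^{r\times r} \text{ of full rank}.
\]

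It remains to check that $A^T$ satisfies the hypotheses of Theorem \ref{main-theorem-1}. From $G_1=B^TA^T$, every $r\times r$ minor of $G_1$ equals $\det(A^T)$ times the corresponding $r\times r$ minor of $B^T$; since $B$ is ZLP the gcd of its $r\times r$ minors is a unit, so $d_r(G_1)=\det(A^T)$ up to a unit and therefore $\det(A^T)\in K[x_1]$. The nonzero $i\times i$ minors of $\left(\begin{smallmatrix}A^T & 0\\ 0 & 0\end{smallmatrix}\right)$ are exactly the $i\times i$ minors of $A^T$, so Proposition \ref{lemma-reduced-2} yields $J_i(A^T)=J_i(F)=K[x_1,\ldots,x_n]$ for $i=1,\ldots,r$. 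Theorem \ref{main-theorem-1} then asserts that $A^T$ is equivalent to its Smith form, whence $F$ is equivalent to that Smith form padded with zero blocks, which is precisely the Smith form of $F$.

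The main obstacle I anticipate is cleanly handling the degenerate boundary cases $r=l$ or $r=m$, in which one of the Quillen-Suslin extensions collapses (a square ZLP matrix is already unimodular, so no extension is needed), together with carefully checking that the reduced-minor ideals and the location of $\det(A^T)$ in $K[x_1]$ are preserved through both reductions; beyond this bookkeeping the argument is a straightforward two-sided reduction that converts Corollary \ref{main-corollary} into a single application of Theorem \ref{main-theorem-1}.
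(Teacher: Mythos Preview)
Your proposal is correct and follows essentially the same approach as the paper: apply Lin--Bose and Quillen--Suslin once to strip columns and once (on the transpose) to strip rows, reducing to a square full-rank core to which Theorem \ref{main-theorem-1} applies. The paper phrases the second step as factoring $G_1=G_2G_3$ with $G_2$ ``zero right prime'' rather than transposing explicitly, but this is only a cosmetic difference; your verification that $\det(A^T)=d_r(G_1)=d_r(F)\in K[x_1]$ makes explicit a step the paper leaves implicit.
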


 \begin{proof}
  Since the necessity is obvious, we only need to prove the sufficiency.

  Since $J_r(F)=K[x_1,\ldots,x_n]$, by Theorem \ref{Lin-Bose-theorem} there are two polynomial matrices $G_1\in K[x_1,$ $\ldots,x_n]^{l\times r}$ and $F_1\in K[x_1,\ldots,x_n]^{r\times m}$ such that
  \begin{equation}\label{coro-equ-1}
   F = G_1\cdot F_1,
  \end{equation}
  where $F_1$ is a ZLP matrix. According to the Quillen-Suslin theorem, there exits a unimodular matrix $U\in K[x_1,\ldots,x_n]^{m\times m}$ such that
  \begin{equation}\label{coro-equ-2}
   F_1\cdot U = (I_r, 0_{r\times (m-r)}),
  \end{equation}
  where $I_r$ is the $r\times r$ identity matrix. Multiplying both sides of Equation \eqref{coro-equ-1} by $U$ and combining it with Equation \eqref{coro-equ-2}, we get
  \begin{equation}\label{coro-equ-3}
   F \cdot U = (G_1, 0_{r\times (m-r)}).
  \end{equation}
  It follows that $F\sim_{K[x_1,\ldots,x_n]} (G_1, 0_{r\times (m-r)})$. Based on Proposition \ref{lemma-reduced-2}, we have $J_r(G_1) = J_r(F) =  K[x_1,\ldots,x_n]$. Using Theorem \ref{Lin-Bose-theorem} again, there are two matrices $G_2\in K[x_1,\ldots,x_n]^{l\times r}$ and $G_3 \in K[x_1,\ldots,x_n]^{r\times r}$ such that $G_1=G_2\cdot G_3$, where $G_2$ is a zero right prime matrix. It follows from the Quillen-Suslin theorem that there is a unimodular matrix $V\in K[x_1,\ldots,x_n]^{l\times l}$ such that
  \begin{equation}\label{coro-equ-5}
   V \cdot G_2 = \left(\begin{array}{c} I_r \\
   0_{(l-r)\times r} \\ \end{array}\right).
  \end{equation}
  Multiplying both sides of Equation \eqref{coro-equ-3} by $V$ and combining it with Equation \eqref{coro-equ-5}, we have
  \begin{equation}\label{coro-equ-6}
   V \cdot F \cdot U = \left(\begin{array}{cc} G_3 & 0_{r\times (m-r)} \\
   0_{(l-r)\times r} & 0_{(l-r)\times (m-r)} \\ \end{array}\right).
  \end{equation}
  By Proposition \ref{lemma-reduced-2}, we obtain
  \begin{equation}\label{coro-equ-4}
   d_i(G_3) = d_i(F) \text{ and } J_i(G_3) = J_i(F) = K[x_1,\ldots,x_n],
  \end{equation}
  where $i=1,\ldots,r$. Let $D\in K[x_1,\ldots,x_n]^{r\times r}$ be the Smith form of $G_3$, then $G_3\sim_{K[x_1,\ldots,x_n]} D$ by Theorem \ref{main-theorem-1}. Therefore, we have
  \begin{equation*}\label{coro-equ-7}
   F \sim_{K[x_1,\ldots,x_n]} \left(\begin{array}{cc} D & 0_{r\times (m-r)} \\
   0_{(l-r)\times r} & 0_{(l-r)\times (m-r)} \\ \end{array}\right).
  \end{equation*}
  This implies that $F$ is equivalent to its Smith form.
 \end{proof}

\section{Concluding remarks}\label{sec_conclusions}

 Let $F\in K[x_1,\ldots,x_n]^{l\times m}$ with rank $r$. In this paper we study the equivalence problem of $F$ and its Smith form, where $F$ satisfies the condition that $d_r(F)\in K[x_1]$. Through an illustrative example, we analyzed three key challenges that arose during the process of equivalence and successfully resolved them one by one. Compared with previous works, one of the focuses of this paper is to prove the validity of Lemma \ref{biv-main-lemma-1-multi}. With the help of the localization technique, we have ultimately obtained the necessary and sufficient condition for $F$ to be equivalent to its Smith form, which is all the $i\times i$ reduced minors of $F$ generate $K[x_1,\ldots,x_n]$, where $i=1,\ldots,r$. It is evident that we have generalized Frost and Storey's assertion to a wider scope. We hope the results provided in the paper will motivate further research in the area of the equivalence of multivariate polynomial matrices.

\section*{Acknowledgments}

 This research was supported by the National Natural Science Foundation of China under Grant Nos. 12171469 and 12201210, the Natural Science Foundation of Sichuan Province under Grant No. 2024NSFSC0418, the National Key Research and Development Program under Grant No. 2020YFA0712300, and the Fundamental Research Funds for the Central Universities under Grant No. 2682024ZTPY052.

\begin{appendix}

\section{\label{sec:appendix}}

 Although the proof of Lemma \ref{main-lem-1-multi} is similar to that of Lemma \ref{main-lem-yengui}, for the sake of the rigor of the argument and the ease of understanding we still give a detailed proof here.

 \begin{proof}[Proof of Lemma {\rm \ref{main-lem-1-multi}}]
  Since $F(x_1,\ldots,x_{n-1},x_n) \sim_{(K[x_1]_S)[x_2,\ldots,x_n]} F(x_1,\ldots,x_{n-1},0)$, there are two unimodular matrices $U,V\in (K[x_1]_S)[x_2,\ldots,x_n]^{l\times l}$ such that
  \begin{equation}\label{main-equ-1-multi}
   F(x_1,\ldots,x_{n-1},x_n) = U\cdot F(x_1,\ldots,x_{n-1},0) \cdot V.
  \end{equation}
  Setting the variable $x_n$ in Equation \eqref{main-equ-1-multi} to $x_n+z$, we get
 \begin{equation}\label{main-equ-2-multi}
  \begin{split}
   F(x_1,\ldots,x_{n-1},x_n+z) ~ = & ~ ~ U(x_1,\ldots,x_{n-1},x_n+z)\cdot U^{-1}\cdot F(x_1,\ldots,x_{n-1},x_n) \\
       &  ~ ~ \cdot V^{-1}\cdot V(x_1,\ldots,x_{n-1},x_n+z).
  \end{split}
 \end{equation}

  Let $E(x_1,\ldots,x_{n-1},x_n,z) = U(x_1,\ldots,x_{n-1},x_n+z)\cdot U^{-1}$, then $\det(E)=1$ by the fact that $\det(U)$ is a unit in $K[x_1]_S$. Expanding $E(x_1,\ldots,x_{n-1},x_n,z)$ into a matrix polynomial w.r.t. $z$, then
  \begin{equation*}\label{main-equ-3-multi}
   \begin{split}
    E(x_1,\ldots,x_{n-1},x_n,z) ~ = & ~ ~ E_0(x_1,\ldots,x_{n-1},x_n) + E_1(x_1,\ldots,x_{n-1},x_n)\cdot z \\
    & ~ ~ + E_2(x_1,\ldots,x_{n-1},x_n)\cdot z^2 + \cdots + E_d(x_1,\ldots,x_{n-1},x_n)\cdot z^d,
   \end{split}
  \end{equation*}
  where $E_i(x_1,\ldots,x_{n-1},x_n)\in (K[x_1]_S)[x_2,\ldots,x_n]^{l\times l}$ and $i=0,1,\ldots,d$. Since $E(x_1,\ldots,x_{n-1},x_n,$ $0) = U\cdot U^{-1} = I_l$, we have $E_0(x_1,\ldots,x_{n-1},x_n) = I_l$. Moreover, for each $i$ there exists $s_i\in S$ such that
  \begin{equation*}\label{main-equ-4-multi}
   E_i(x_1,\ldots,x_{n-1},x_n)\cdot (s_iz)^i\in K[x_1,\ldots,x_{n-1},x_n,z]^{l\times l}.
  \end{equation*}
  Let $s'=\lcm(s_1,\ldots,s_d)$, then $E(x_1,\ldots,x_{n-1},x_n,s'z)\in K[x_1,\ldots,x_{n-1},x_n,z]^{l\times l}$. It is obvious that $\det(E(x_1,\ldots,x_{n-1},x_n,s'z))=1$. This implies that $E(x_1,\ldots,x_{n-1},x_n,s'z)$ is a unimodular matrix.

  Let $H(x_1,\ldots,x_{n-1},x_n,z) = V^{-1}\cdot V(x_1,\ldots,x_{n-1},x_n+z)$, then $\det(H)=1$ by the fact that $\det(V)$ is a unit in $K[x_1]_S$. Proceeding as the above proof, there is another polynomial $s''\in S$ such that $H(x_1,\ldots,x_{n-1},x_n,s''z)\in K[x_1,\ldots,x_{n-1},x_n,z]^{l\times l}$ and is a unimodular matrix.

  Let $s = \lcm(s',s'')$ and substitute $sz$ for $z$ in Equation \eqref{main-equ-2-multi}, we obtain
  \begin{equation*}\label{main-equ-5-multi}
   F(x_1,\ldots,x_{n-1},x_n+sz) = E(x_1,\ldots,x_{n-1},x_n,sz)\cdot F(x_1,\ldots,x_{n-1},x_n)\cdot H(x_1,\ldots,x_{n-1},x_n,sz),
  \end{equation*}
  where $E(x_1,\ldots,x_{n-1},x_n,sz),H(x_1,\ldots,x_{n-1},x_n,sz)\in K[x_1,\ldots,x_{n-1},x_n,z]^{l\times l}$ are unimodular matrices. By the fact that $S$ is multiplicatively closed, it follows that $s\in S$. Let $C(x_1,\ldots,x_{n-1},x_n,$ $z) = E(x_1,\ldots,x_{n-1},x_n,sz)$ and $D(x_1,\ldots,x_{n-1},x_n,z) = H(x_1,\ldots,x_{n-1},x_n,sz)$, then
  \begin{equation*}\label{main-equ-6-multi}
   F(x_1,\ldots,x_{n-1},x_n+sz) = C(x_1,\ldots,x_{n-1},x_n,z) \cdot F(x_1,\ldots,x_{n-1},x_n) \cdot D(x_1,\ldots,x_{n-1},x_n,z).
  \end{equation*}
  The proof is completed.
 \end{proof}

\end{appendix}

\end{document}